\begin{document}
\fancyhead{}

\title{How Does Blockchain Security Dictate Blockchain Implementation? }

\author{Andrew Lewis-Pye}
\affiliation{%
  \institution{London School of Economics}
  }
\email{a.lewis7@lse.ac.uk}

\author{Tim Roughgarden}
\affiliation{%
  \institution{Columbia University}
  }
\email{tim.roughgarden@gmail.com}

%
%



\begin{abstract}
   Blockchain protocols come with a variety of security guarantees.  For example, BFT-inspired protocols such as Algorand\footnote{For an exposition of Algorand that explains how to achieve security in the partially synchronous setting, see \cite{chen2018algorand}.} tend to be secure in the partially synchronous setting, while longest chain protocols like Bitcoin will normally require stronger synchronicity to be secure. 
Another fundamental distinction,  directly relevant to scalability
solutions such as sharding, is whether or not a single untrusted user
is able to point to \emph{certificates}, which provide
incontrovertible proof of block confirmation. Algorand produces such
certificates, while Bitcoin does not.  Are these properties
accidental? Or are they inherent consequences of the paradigm of
protocol design?  Our aim in this paper is to understand what,
fundamentally, governs the nature of security  for permissionless
blockchain protocols. Using the framework developed in \cite{lewis2021byzantine},
we prove general results showing that these questions relate directly
to properties of the user selection process, i.e.\ the method (such as
proof-of-work or proof-of-stake) which is used to select users with
the task of updating state. Our results suffice to establish, for
example,  that the production of certificates is impossible for
proof-of-work protocols, but is automatic for standard forms of
proof-of-stake protocols. As a byproduct of our work, we also define a number of security notions and identify the equivalences and inequivalences among them. 
\end{abstract}

\begin{CCSXML}
<ccs2012>
<concept>
<concept_id>10010520.10010575</concept_id>
<concept_desc>Computer systems organization~Dependable and fault-tolerant systems and networks</concept_desc>
<concept_significance>500</concept_significance>
</concept>
</ccs2012>
\end{CCSXML}

\ccsdesc[500]{Computer systems organization~Dependable and fault-tolerant systems and networks}

\keywords{blockchain; cryptocurrencies; proof-of-work; proof-of-stake; Byzantine fault tolerant; longest-chain}

\maketitle

\section{Introduction} \label{intro} 

\paragraph{Paradigms for blockchain protocol design.}
In the wake of Bitcoin \cite{nakamoto2008bitcoin}, thousands of cryptocurrencies have flooded the market. While many of these currencies use only slight modifications of the Bitcoin protocol, there are also a range of cryptocurrencies taking radically different design approaches. 
Two informal distinctions are between: 

\begin{enumerate}
\item Proof-of-stake (PoS)/proof-of-work (PoW). In a PoW protocol, users are selected and given the task of updating state, with the probability any particular user is chosen being proportional to their (relevant) computational power. In PoS protocols, users are selected with probability depending on their stake (owned currency).  
\item BFT\footnote{The acronym BFT stands for `Byzantine-Fault-Tolerant'.}/longest-chain. As well as being a PoW protocol, Bitcoin is the best known example of a longest chain protocol.  This means that forks may occur in the blockchain, but that honest miners will
build on the longest chain. In a BFT protocol, on the other hand, users are selected and asked to carry
out a consensus protocol designed for the permissioned setting. So, roughly, longest chain protocols are those which are derived from Bitcoin, while BFT protocols are derived from protocols designed in the permissioned setting. Algorand \cite{chen2016algorand} is a well known example of a BFT protocol.
\end{enumerate}

\paragraph{A formal framework for comparing design paradigms~\cite{lewis2021byzantine}.}
While informal, these distinctions are more than aesthetic. For example, BFT protocols like Algorand will tend to give security guarantees that hold under significantly weaker network connectivity assumptions than are required to give security for protocols like Bitcoin. By developing an appropriate formal framework,  it can then be shown \cite{lewis2021byzantine}  that these differences in security are a \emph{necessary} consequence of the paradigm of protocol design: The fact that Bitcoin is a PoW protocol means that it cannot offer the same flavour of security guarantees as Algorand. A framework of this kind was developed in \cite{lewis2021byzantine}, according to which permissionless \footnote{In the distributed computing literature, consensus protocols have traditionally been studied in a setting where all participants  are known to each other from the start of the protocol execution. In the parlance of the blockchain literature, this is referred to as the  \emph{permissioned} setting. What differentiates  Bitcoin  \cite{nakamoto2008bitcoin} from these previously studied protocols is that it operates in a \emph{permissionless} setting, i.e.\ it is a protocol for establishing consensus over an unknown network of participants that anybody can join, with as many identities as they like in any role. } protocols run relative to a \emph{resource pool}. This resource pool specifies a balance for each user over the duration of the protocol execution (such as hashrate or stake), which may be used in determining which users are permitted to update state. Within this framework, the idea that protocols like Bitcoin require stronger connectivity assumptions for security can be formalised as a theorem asserting that \emph{adaptive} protocols cannot be \emph{partition secure} -- these terms apply to permissionless blockchain protocols and will be defined formally later on, but, roughly, they can be summed up as follows:
\begin{itemize}
\item   \emph{Liveness} and \emph{security} are  defined in terms of a notion of \emph{confirmation} for blocks. 
A protocol is live if the number of confirmed blocks can be relied on to increase during extended intervals of time during which message delivery is reliable. A  protocol is secure if rollback on confirmed blocks is unlikely. 
\item Bitcoin being adaptive means that it remains live in the face of an unpredictable size of resource pool (unpredictable levels of mining).  
\item A protocol is partition secure if it is secure in the \emph{partially synchronous} setting, i.e.\ if the rollback of confirmed blocks remains unlikely even in the face of potentially unbounded network partitions.  The partially synchronous setting will be further explained and formally defined in Section \ref{framework}.
\end{itemize}


\paragraph{This paper: certificates.}
The way in which Algorand and other BFT protocols achieve partition security is also worthy of note. For all such protocols, protection against unbounded network partitions is provided through the production of \emph{certificates}: These are sets of broadcast messages whose very existence suffices to establish block confirmation and which cannot be produced by a (suitably bounded) adversary given the entire duration of the execution of the protocol.  Bitcoin does not produce certificates, because the existence of a certain chain does generally not prove that it is the longest chain -- a user will only believe that a certain chain is the longest chain until presented with a longer (possibly incompatible) chain.  Algorand does produce certificates, on the other hand, because the very existence of a valid chain, together with appropriate committee signatures for all the blocks in the chain, suffices to guarantee (beyond a reasonable doubt) that the blocks in that chain are confirmed. 
 We will formally define what it means for a protocol to produce certificates in Section \ref{pss}.

 The production of certificates is also functionally useful, beyond providing security against network partitions. The production of certificates means, for example,  that a  single untrusted user is able to convince another user of block confirmation (by relaying an appropriate certificate), and this is potentially very useful in the context of sharding. If a user wishes to learn the state of a blockchain they were not previously monitoring, then it is no longer necessary to perform an onboarding process in which one samples the opinions of users until such a point that it is likely that at least one of them was `honest' -- one simply requests a certificate proving confirmation for a recently timestamped block.\footnote{Such techniques can  avoid the need to store block hashes in a  sharding `main chain', and the information withholding attacks that come with those approaches.}

\subsection{Overview of results.}
The goal of this paper is to rigorously investigate to what extent
today's protocols ``have to look the way they are'' given the security
guarantees they achieve. 
Such formal analyses are relevant to the broader research community
for several reasons, including:
(i) accurate intuitions of the community (e.g., that there's
fundamentally only one way to achieve certain properties) can be
formally validated, with the necessary assumptions clearly spelled out;
(ii) inaccurate intuitions can be exposed as such;
(iii) unexplored areas of the protocol design space can naturally rise
to the surface (e.g., Section~\ref{ss:recalibration}); and (iv) new
definitions (e.g., certificates) can enhance our language for crisply
describing and comparing competing solutions (both present and future).
In this paper, we prove three main results, which each address this issue in a different setting. 


\vspace{0.2cm} \noindent \textbf{The partially synchronous setting.}
The first key question is:

\begin{enumerate} 
\item[Q1.] Are certificates fundamental to partition security, or an artifact of Algorand's specific implementation?  That is, are certificates the \emph{only} way for permissionless blockchain protocols to achieve security in the partially synchronous setting?
\end{enumerate} 

\noindent Our first main result, proved in the context of the
framework of \cite{lewis2021byzantine}, gives an affirmative response to Q1. Of course, all terms will be explained and formally defined in later sections. 

\vspace{0.2cm}  THEOREM 3.3. \emph{If a permissionless blockchain protocol is secure in the partially synchronous setting, then it produces certificates.}
\vspace{0.2cm} 

\noindent Since it will be easily observed that the production of
certificates implies security, Theorem \ref{strongsame} shows that, in
the partially synchronous setting, the production of certificates is
actually \emph{equivalent} to security. 

\vspace{0.2cm} \noindent \textbf{The synchronous setting.} What about Bitcoin?  While
Bitcoin does not satisfy the conditions of Theorem \ref{strongsame},
it clearly has some non-trivial security.  The standard formalisation
in the literature
\cite{ren2019analysis,garay2018bitcoin} is that
Bitcoin is secure in the \emph{synchronous setting}, for which
there is an upper bound on message delivery time.\footnote{The
  synchronous setting will be further explained and formally defined
  in Section \ref{framework}.} Even working in the synchronous
setting, though, it is clear that Bitcoin does not produce
certificates.  Again, we are led to ask whether this is a necessary
consequence of the paradigm of protocol design:

\begin{enumerate} 
\item[Q2.] Could there be a Bitcoin-like protocol that, at least in the synchronous setting, has as strong a security guarantee in terms of the production of certificates as BFT-type protocols do in the partially synchronous setting? 
\end{enumerate}

\noindent 
The answer depends on key features of the resource pool -- recall that the resource pool specifies a balance for each user over the duration of the protocol execution, such as hashrate or stake. The crucial
distinction here is between scenarios in which the size of the
resource pool is known (e.g. PoS), and scenarios where the size of the
resource pool is unknown (e.g. PoW). As per the framework in
\cite{lewis2021byzantine}, we will refer to these as the \emph{sized} and
\emph{unsized} settings, respectively -- formal definitions will be
given in Section \ref{cs}.  As alluded to above, we define a protocol
to be adaptive if it is is live in the unsized setting, and it was shown 
in \cite{lewis2021byzantine} that adaptive protocols cannot be secure in the
partially synchronous setting. 

\vspace{0.2cm} \noindent \textbf{The synchronous and unsized setting.}  The term ``non-trivial adversary'',
which is used in Theorem \ref{t2} below, will be defined in Section
\ref{cs} so as to formalise the idea that the adversary may have at
least a certain minimum resource balance throughout the
execution. With these basic definitions in place, we can then give a
negative answer to Q2.


\vspace{0.2cm} 
THEOREM 5.1 
 \emph{Consider the  synchronous and unsized setting. If a permissionless blockchain protocol is live then, in the presence of a non-trivial adversary,  it does not produce  certificates.}
 \vspace{0.2cm} 
 
 So, while Theorem \ref{strongsame} showed that the production of certificates is \emph{necessary} in the partially synchronous setting, Theorem \ref{t2} shows that the production of certificates isn't \emph{possible} in the unsized setting (in which PoW protocols like Bitcoin operate). Following on
 from our previous discussion regarding the relevance of certificates
 to sharding, one direct application of this result is that it rules
 out certain approaches to sharding for PoW protocols.  
 
\vspace{0.2cm} \noindent \textbf{The synchronous and sized setting.}   In the sized
 setting (such as for PoS protocols), though, it is certainly
 \emph{possible} for protocols to produce certificates. It therefore
 becomes a natural question to ask how far we can push this:
 
\begin{enumerate} 
\item[Q3.] Does the production of certificates come down purely to properties of the process of user selection? Is it simply a matter of whether one is in the sized or unsized setting?  
\end{enumerate} 

\noindent Our final theorem gives a form of positive response to Q3. We state an informal version of the theorem below. A formal version will be given in Section \ref{cs}.

\vspace{0.2cm} THEOREM 5.6 (INFORMAL VERSION). 
\emph{Consider the synchronous and sized setting, and suppose a permissionless blockchain protocol is of `standard form'. Then there exists a `recalibration' of the protocol which produces certificates. }
\vspace{0.2cm} 

\noindent Theorem \ref{t3} says, in particular, that all `standard'
PoS protocols can be tweaked to get the strongest possible security
guarantee, since being of `standard form' will entail satisfaction of
a number of conditions that are normal for such protocols. Roughly
speaking, one protocol will be considered to be a recalibration of
another if running the former just involves running the latter for a
computable transformation of the input parameters and/or using a
different notion of block confirmation.  The example of Snow
White~\cite{bentov2016snow} may be instructive here (for the purposes of this example, the particulars of the Snow White protocol are not important -- all that matters is that,  at a high
level, Snow White might be seen as a PoS version of Bitcoin, but with the fundamental differences that it operates in the sized setting, and that blocks have non-manipulable timestamps). Snow White is a
PoS longest chain protocol, and it is not difficult to see that, with
the standard notion of confirmation, it does not produce certificates
-- an adversary can produce chains of blocks which are \emph{not}
confirmed, but which \emph{would be} considered confirmed in the
absence of other blocks which have been broadcast. So whether a block
is confirmed depends on the whole set of broadcast messages. On the
other hand, it is also not difficult to adjust the notion of
confirmation so that Snow White \emph{does} produce certificates. An
example would be to consider a block confirmed when it belongs to a
long chain of sufficient \emph{density} (meaning that it has members
corresponding to most possible timeslots) that it could not likely be
produced by a (sufficiently bounded) adversary.  We will see further
examples like this explained in greater depth in
Section~\ref{cs}. Theorem \ref{t3} implies much more generally that
PoS protocols can always be modified so as to produce certificates in
this way.

\paragraph{The punchline.}
Whether or not a permissionless blockchain protocol produces certificates comes down essentially
to whether one is working in the sized or unsized setting (e.g.\
whether the protocol is PoS or PoW). This follows from the following
results that we described above: 
\begin{enumerate}
\item[(i)] According to the results of \cite{lewis2021byzantine}, only protocols which work in the sized setting can be secure in the partially synchronous setting. According to Theorem \ref{strongsame}, all such protocols produce certificates. 
\item[(ii)] Theorem \ref{t2} tells us that, in the synchronous and unsized setting, protocols cannot produce certificates. 
\item[(iii)] Theorem \ref{t3}  tells us that all  \emph{standard} protocols in the sized and synchronous setting can be recalibrated to produce certificates.  
\end{enumerate}

  \subsection{Related work}
  
  There are a variety of papers from the distributed computing literature that analyse settings somewhere between the permissioned and permissionless settings as considered here. In \cite{okun2005distributed}, for example, Okun considered a setting which a fixed number of processors communicate by private channels, where each processor may or may not have a unique identifier, and where processors may or may not be `port aware', i.e.\ be able to tell which channel a message arrives from.  A number of papers \cite{cavin2004consensus,alchieri2008byzantine} have also considered the problem of reaching consensus amongst unknown participants (CUP). In the framework considered in those papers, the number and the identifiers of other participants may be unknown from the start of the protocol execution. A fundamental difference with the permissionless setting considered here is that, in the CUP framework,  all participants have a unique identifier and the adversary is unable to obtain additional identifiers to be able
to launch a sybil attack against the system, i.e.\ the number of identifiers controlled by the adversary is bounded.

  The Bitcoin protocol was first described in 2008 \cite{nakamoto2008bitcoin}. Since then, a number of papers \cite{garay2018bitcoin,WHGSW16} have developed frameworks for the analysis of  Bitcoin in which oracles are introduced for modelling PoW. A more general form of oracle is required for modelling PoS and other forms of permissionless protocol, however. In \cite{lewis2021byzantine} a framework was introduced that described a generalised form for  such oracles.  We use that framework in this paper, but also develop that framework in Sections \ref{blockstructure}, \ref{probs},  \ref{defl},  \ref{sec} and \ref{recal}  to be appropriate specifically for the analysis of blockchain protocols.

\section{The Framework}   \label{framework} 

We work within the framework of \cite{lewis2021byzantine}.  While we describe the framework in its entirety here, we refer the reader to the the original paper for further examples and explanations of the framework set-up. Within Section \ref{framework}, it is the definitions of Sections 2.4, 2.5, 2.7 and 2.8 that are new to this paper (all definitions of Sections \ref{pss}, \ref{sh} and \ref{cs} are also new to this paper). 

Most of this section can be briefly summed up as follows -- all undefined terms in the below will be formalised and defined in later subsections. 
\begin{itemize} 
\item Protocols are executed by an unknown number of users, each of which is formalised as a deterministic processor that controls a set of public keys. 
\item Processors have the ability to \emph{broadcast} messages to all other processors. 
The duration of the execution, however, may be divided into \emph{synchronous} or \emph{asynchronous} intervals. During asynchronous intervals, an \emph{adversary} can tamper with message delivery as they choose. During synchronous intervals there is a given upper bound on message delivery time.     We then distinguish two
\emph{synchronicity settings}. In the \emph{synchronous} setting it is
assumed that there are no asynchronous intervals,
while in the \emph{partially synchronous} setting there may be
unpredictably long asynchronous intervals. 
\item Amongst all broadcast messages, there is a distinguished set referred
to as \emph{blocks}, and one block which is referred to as the
\emph{genesis block}.  Unless it is the genesis block, each block $B$
has a unique \emph{parent} block.
\item To blackbox the process of user selection, whereby certain users are selected and given the task of updating state,  \cite{lewis2021byzantine} introduces two new notions: (1) Each public key is considered to have a certain \emph{resource balance}, which may vary over the execution, and; (2) The protocol will also be run relative to a \emph{permitter oracle}, which may respond to this resource balance.  For a PoW protocol like Bitcoin, the resource balance of each public key will be their (relevant)
computational power at the given timeslot.
\item It is the permitter oracle which then gives permission to broadcast messages updating state. To model Bitcoin, for example, we sometimes have the permitter allow another user to broadcast a new block, with the probability this happens for each user being proportional to their resource balance. 
\item Liveness and security are defined in terms of a notion of \emph{confirmation} for blocks. 
Roughly, a protocol is live if the number of confirmed blocks can be relied on to increase during extended intervals of time during which message delivery is reliable. A protocol is secure if rollback on confirmed blocks is unlikely.
\end{itemize}

\subsection{The computational model}  \label{cm} 
\noindent \textbf{Overview.} There are a number of papers analysing Bitcoin \cite{garay2018bitcoin,WHGSW16} that take the approach of working within the language of the UC framework of Canetti \cite{canetti2001universally}. Our position is that this provides a substantial barrier to entry for researchers in blockchain who do not have a strong background in security, and that the power of the UC framework remains essentially unused in the subsequent analysis. Instead, we use a very simple  computational model, which is designed to be as similar as possible to standard models from distributed computing (e.g. \cite{DLS88}),  while also being adapted to deal with the permissionless setting. We thus consider an information theoretic model in which processors are simply specified by state transition diagrams. A \emph{permitter oracle} is introduced as a generalisation of the random oracle functionality in the Bitcoin Backbone paper \cite{garay2018bitcoin}:  It is the permitter oracle's role to grant \emph{permissions} to broadcast messages.  The duration of the execution is divided into timeslots. Each processor enters each timeslot $t$ in a given \emph{state} $x$, which determines the instructions for the processor in that timeslot -- those instructions may involve broadcasting messages, as well as sending \emph{requests} to the permitter oracle. The state $x'$ of the processor at the next timeslot is determined by the state $x$, together with the messages and permissions received at $t$.

Since we focus on impossibility results, we simplify the presentation by making the assumption that we are always working in the \emph{authenticated} setting, in which processors have access to public/private key pairs. This assumption is made purely for the sake of simplicity, and the results of the paper do not depend upon it. 

\vspace{0.2cm} 
\noindent \textbf{Formal description.}  For a list of commonly used variables and terms, see Table \ref{terms} in the appendix. We consider a finite\footnote{In \cite{lewis2021byzantine}, a potentially infinite number of processors were allowed, but each processor was given a single public key (identifier). Here, we will find it convenient to consider instead a finite number of processors, each of which may control an unbounded number of public keys.} system of \emph{processors}. Each processor $p$ is specified by a state transition diagram, for which the number of states may be infinite. 
Amongst the states of a processor are a non-empty set of possible \emph{initial states}. The \emph{inputs} to $p$ determine which initial state it starts in.  If a variable is specified as an input to $p$, then we refer to it as \emph{determined} for $p$, referring to the variable as \emph{undetermined}  for $p$ otherwise.
If a variable is determined/undetermined for all $p$, we simply refer to it as determined/undetermined. 
Amongst the inputs to $p$ is an infinite set $\mathcal{U}_p$ of public keys,  which are specific to $p$ in the sense that if $\mathtt{U}\in \mathcal{U}_p$ and $\mathtt{U}'\in \mathcal{U}_{p'}$ then $\mathtt{U}\neq \mathtt{U}' $ when $p\neq p'$.  A principal difference between the permissionless setting (as considered here) and the permissioned setting (as studied in classical distributed computing) is that, in the permissionless setting, the number of processors is undetermined, and $\mathcal{U}_p$ is undetermined for $p'$ when $p'\neq p$.

Processors are able to \emph{broadcast} messages. To model permissionless protocols, such as Bitcoin, in which each processor has limited ability to broadcast new blocks (and possibly other messages), we require any message broadcast by $p$ to be \emph{permitted} for some public key in $\mathcal{U}_p$: The precise details are as follows. We consider a real-time clock, which exists  outside the system and  measures time in natural number timeslots. The \emph{duration} $\mathcal{D}$ is a determined variable that specifies the set of timeslots (an initial segment of the natural numbers) at which processors carry out instructions. 
At each timeslot $t$, each processor $p$ \emph{receives} a pair $(M,P)$, where either or both of $M$ and $P$ may be empty. Here, $M$ is a finite set of \emph{messages} (i.e.\ strings) that have previously been \emph{broadcast} by other processors. We refer to $M$ as the \emph{message set} received by $p$ at  $t$, and say that each message $m\in M$ is received by $p$ at timeslot $t$. $P$ is referred to as the \emph{permission set} received by $p$ at $t$. Formally, $P$ is a set of pairs, where each pair is of the form $(\mathtt{U},M^{\ast})$ such that $\mathtt{U}\in \mathcal{U}_p$ and $M^{\ast}$ is a potentially infinite set of messages.  If  $(\mathtt{U},M^{\ast})\in P$, then receipt of the permission set $P$  means that each message $m\in M^{\ast}$ may now be permitted for $\mathtt{U}$. This is complicated slightly by our need to model the authenticated setting within an information theoretic model -- we do this by declaring that only $p$ is permitted to  broadcast messages signed by keys in $\mathcal{U}_p$. More precisely, $m\in M^{\ast}$ is permitted for $\mathtt{U}$ if the following conditions are also satisfied: 
\begin{itemize} 
\item $m$ is of the form $(\mathtt{U},\sigma)$ -- thought of as `the message $\sigma$ signed by $\mathtt{U}$'.  
\item For any ordered pair of the form $(\mathtt{U}',\sigma')$ contained in (i.e.\ which is a substring of) $\sigma$, either $\mathtt{U}'\in \mathcal{U}_p$, or else $(\mathtt{U}',\sigma')$ is contained in a message that has been received by $p$.
\end{itemize}
So, as suggested in the above, the latter bulleted conditions allow us to model the fact that we work in the authenticated setting (i.e.\ we assume the use of digital signatures) within an information theoretic computational model.  

To complete the instructions for timeslot  $t$, $p$ then broadcasts a finite set of messages $M'$, each of which must be permitted for some $\mathtt{U}\in \mathcal{U}_p$, makes a  \emph{request set} $R$,  and then enters a new state $x'$, where $x',M'$ and $R$ are determined by the present state $x$ and $(M,P)$, according to the state transition diagram. The form of the request set $R$ will be described shortly, together with how $R$ determines the permission set  received at by $p$ at the next timeslot.

 An \emph{execution} is described by specifying the set of processors, the duration,  the initial states for all processors and by specifying, for each timeslot $t\geq 1$:
\begin{enumerate}
\item The messages and permission sets received by each processor;
\item The instruction that each processor executes, i.e.\ what messages it broadcasts, what requests it makes, and the new state it enters.
\end{enumerate} 

We require that each message is received by $p$ at most once for each time it is broadcast, i.e.\  at the end of the execution it must be possible to specify an injective function $d_p$ mapping each pair $(m,t)$, such that $m$ is received by $p$ at timeslot $t$, to a triple $(p',m,t')$, such that $t'<t$, $p'\neq p$ and such that $p'$ broadcast $m$ at $t'$.

\subsection{The resource pool and the permitter} \label{RP}

 \noindent \textbf{Informal Motivation.} Who should be allowed to create and broadcast new Bitcoin blocks? More broadly, when defining a permissionless protocol,  who should be able to broadcast new messages? For a PoW protocol,  the selection is made depending on computational power. PoS protocols are defined in the context of specifying how to run a currency, and select public keys  according to their stake in the given currency. More generally, one may consider a scarce resource, and then select public keys according to their corresponding resource balance.
 In \cite{lewis2021byzantine}, a framework was introduced according to which protocols run relative to a \emph{resource
pool}, which specifies a resource balance for each public key over the duration of the execution.  
The precise way in which the resource pool is used to determine public key  selection is then black boxed through the use of the \emph{permitter oracle}, to which processors can make requests to broadcast, and which will respond depending on their resource balance. To model Bitcoin, for example, one simply allows each public key to make one request to broadcast a block at each timeslot. The permitter oracle then gives a positive response with probability depending on their resource balance, which in this case is defined by hashrate. So, this gives a straightforward way to model the process, without the need for a detailed discussion of hash functions and how they are used to instantiate the selection process. \\

\noindent \textbf{Formal specification.}  At each timeslot $t$, we refer to the set of all messages that have already been received or broadcast by $p$ as the \emph{message state} of $p$. Each execution happens relative to a (determined or undetermined)
\emph{resource pool},\footnote{As described more precisely in Section \ref{PoWPoS}, whether the resource pool is determined or undetermined will decide whether we are in the \emph{sized} or \emph{unsized} setting.} which in the general case is a function
$\mathcal{R}: \mathcal{U} \times \mathcal{D} \times \mathcal{M}
\rightarrow \mathbb{R}_{\geq 0}$,
where $\mathcal{U}$ is the set of all public keys, $\mathcal{D}$ is the duration and $\mathcal{M}$ is the set of all possible sets of messages. $\mathcal{R}$ can be thought of as specifying the resource
balance of each public key at each timeslot, possibly
relative to a given message state. For each $t$ and $M$, we suppose that certain basic conditions are satisfied:
\begin{enumerate} 
\item[(a)]  If $\mathcal{R}(\mathtt{U},t,M)\neq 0$ then $\mathtt{U}\in \mathcal{U}_p$ for some processor $p$; 
\item[(b)] There are  finitely many $\mathtt{U}$ for which $\mathcal{R}(\mathtt{U},t,M)\neq 0$, and; 
\item[(c)]  $\sum_{\mathtt{U}} \mathcal{R}(\mathtt{U},t,M) >0$. 
\end{enumerate}

 Suppose that, after receiving messages and a permission set at timeslot $t$,  $p$'s message state is $M_0$, and that $M_0^{\ast}$ is the set of all messages that are permitted for $p$ (i.e.\ for some $\mathtt{U}\in \mathcal{U}_p$). We consider two \emph{settings} -- the \emph{timed} and \emph{untimed} settings. The form of each request $r\in R$  made by $p$ at timeslot $t$ depends on the setting, as specified below. While the following definitions might initially seem abstract, shortly we will give examples to make things clear.  
 
 \begin{itemize} 
 
   \item \textbf{The untimed setting}.  Here, each request $r$ made  by  $p$ must be of the form $(\mathtt{U},M,A)$, where
 $\mathtt{U}\in \mathcal{U}_p$, $M \subseteq M_0 \cup M_0^{\ast}$, and where $A$ is
some (possibly empty) extra data. 
The permitter oracle will respond with a pair $(\mathtt{U},M^{\ast})$, where $M^{\ast}$ is a set of strings that may be empty. The value of $M^{\ast}$ 
 will be assumed to be a probabilistic function of the determined variables,   $(\mathtt{U},M,A)$, and of
$\mathcal{R}(\mathtt{U},t,M)$, subject to the condition that $M^{\ast}=\emptyset$ if $\mathcal{R}(\mathtt{U},t,M)=0$. 
If modelling Bitcoin, for example, $M$ might be a set of blocks that have been received by $p$, or that $p$ is already permitted to broadcast, while $A$  specifies a new block extending the `longest chain' in $M$. If the block is valid, then the permitter oracle will give permission to broadcast it with probability depending on the resource balance of $\mathtt{U}$ at time $t$. We will expand on this example below.

 \item \textbf{The timed setting}. Here,  each request $r$ made  by  $p$ must be of the form $(t',\mathtt{U},M,A)$, where $t'$ is a timeslot, 
and  where $\mathtt{U}$, $M$ and $A$ are as in the untimed setting, 
The response  $(\mathtt{U},M^{\ast})$ of the permitter oracle
 will be assumed to be a probabilistic function of the determined variables,  $(t',\mathtt{U},M,A)$, and of
$\mathcal{R}(\mathtt{U},t',M)$, subject to the condition that $M^{\ast}=\emptyset$ if $\mathcal{R}(\mathtt{U},t',M)=0$.  
 \end{itemize} 
The permission  set received by $p$ at timeslot $t+1$ is the set all of responses from the permitter oracle to $p$'s requests at timeslot $t$. 
 

To understand these definitions, it is instructive to consider how they can be used to give a simple model for Bitcoin.  To do so, we work in the untimed setting, and we define the set of possible messages to be the set of possible \emph{blocks}. For each $\mathtt{U}\in \mathcal{U}_p$, we then allow  $p$ to make a single request of the form $(\mathtt{U},M,A)$ at each timeslot. As mentioned above,  $M$ will be a set of blocks that have been received by $p$, or that $p$ is already permitted to broadcast. The entry $A$ will be data (without PoW attached) that specifies a block extending the `longest chain' in $M$. If $A$ specifies a valid block, then the permitter oracle will give permission to broadcast the block specified by $A$ with probability depending on the resource balance of $\mathtt{U}$ at time $t$ (which is determined by hashrate, and is independent of $M$). So, the higher $\mathtt{U}$'s resource balance at a given timeslot, the greater the probability $p$ will be able to mine a block at that timeslot. Of course, a non-faulty processor $p$ will always submit requests of the form $(\mathtt{U},M,A)$, for which $M$ is $p$'s (entire) message state, and such that $A$ specifies a valid block extending the longest chain in $M$.\footnote{So, in this simple model, we don't deal with any notion of a `transaction'. It is clear, though, that the model is sufficient to be able to define what it means for blocks to be \emph{confirmed}, to define notions of \emph{liveness} (roughly, that the set of confirmed blocks grows over time with high probability) and \emph{security} (roughly, that with high probability, the set of confirmed blocks is monotonically increasing over time), and to prove liveness and security for the Bitcoin protocol in this model (by importing existing proofs, such as that in \cite{garay2018bitcoin}). }
 
 The motivation for considering the timed as well as the untimed setting stems from one of the  qualitative differences between PoS and PoW protocols.  PoS protocols are best modelled in the timed setting, where processors can look ahead to determine their permission to broadcast at future timeslots (when their resource balance may be different than it is at present), i.e.\ with PoS protocols, blocks will often have timestamps that cannot be manipulated, and at a given timeslot, a processor may already be able to determine that they have permission to broadcast blocks with a number of different future timestamps.  This means that, when modelling PoS protocols, processors have to be able to make requests corresponding to timeslots $t'$ other than the current timeslot $t$.  We will specify further differences between the timed and untimed settings in Section \ref{PoWPoS}.
 

By a \emph{permissionless protocol} we mean a pair $(\mathtt{S},\mathtt{O})$, where $\mathtt{S}$ is a state transition diagram to be followed by all non-faulty processors, and where $\mathtt{O}$ is a permitter oracle, i.e.\ a probabilistic function of the form described for the timed and untimed settings above. It should be noted that the roles of the resource pool and the
permitter oracle are different, in the following sense: While the resource pool is
a variable (meaning that a given protocol will be expected to function with respect to all possible resource pools consistent with the setting\footnote{Generally, protocols will be considered in a setting that restricts the set of resource pools in certain ways, such as limiting the resource balance of the \emph{adversary}.}), the permitter is part of the protocol description.

\subsection{The adversary and the synchronous and partially synchronous settings} \label{ps} 

While all non-faulty processors follow the state transition diagram $\mathtt{S}$ specified for the protocol, we allow a single undetermined processor $p_A$ to display Byzantine faults, and we think of $p_A$ as being controlled by the \emph{adversary}: In formal terms, the difference between $p_A$ and other processors is that the state transition diagram for $p_A$ might not be $\mathtt{S}$.  Placing bounds on the power of the adversary means limiting their resource balance (since $\mathcal{U}_{p_A}$ is infinite, it does not limit the adversary that they control a single processor). For $q\in [0,1]$, we say the adversary is $q$-\emph{bounded} if their total resource balance is always at most a $q$ fraction of the total, i.e.\ for all $M,t$,  $\sum_{\mathtt{U}\in \mathcal{U}_{p_A}} \mathcal{R}(\mathtt{U},t,M)\leq q\cdot \sum_{\mathtt{U}\in \mathcal{U}} \mathcal{R}(\mathtt{U},t,M) $. 

It is standard in the distributed computing literature \cite{lynch1996distributed} to consider a
variety of \emph{synchronous, partially synchronous}, or
\emph{asynchronous} settings, in which  message delivery might be reliable or subject to various forms
of failure.  We will suppose
that the duration is divided into intervals that are labelled either \emph{synchronous} or
\emph{asynchronous} (meaning that each timeslot is either
synchronous or asynchronous).  We will suppose that during
asynchronous intervals messages can be arbitrarily delayed or not delivered at all. During synchronous intervals, however, we will suppose that messages are always delivered within $\Delta$ many timeslots. So if $t_1\leq t_2$, $m$ is broadcast by $p$ at $t_1$, if $p'\neq p$ and $[t_2,t_2+\Delta]$ is a synchronous interval contained in $\mathcal{D}$, then $p'$ will receive $m$  by timeslot $t_2+\Delta$. Here $\Delta$ is a determined variable. 
 
  We then distinguish two
\emph{synchronicity settings}. In the \emph{synchronous} setting it is
assumed that there are no asynchronous intervals during the duration,
while in the \emph{partially synchronous} setting there may be
\emph{undetermined} asynchronous intervals.

It will be useful to consider the notion of a \emph{timing rule}, by which we mean a partial  function $\mathtt{T}$ mapping tuples of the form $(p,p',m,t)$ to timeslots. We say that an execution follows the timing rule $\mathtt{T}$ if the following holds for all processors $p$ and $p'$:  We have that $p'$ receives $m$ at $t'$  iff there exists some $p$ and $t<t'$ such that $p$ broadcasts the message $m$ at  $t$ and $\mathtt{T}(p,p',m,t)\downarrow =t'$. We restrict attention to timing rules  which are consistent with the setting. Since protocols will be expected to behave well with respect to all timing rules consistent with the setting, it will sometimes be useful to \emph{think of} the adversary as also having control over the choice of timing rule.

\subsection{The structure of the blockchain} \label{blockstructure}

Amongst all broadcast messages, there is a distinguished set referred
to as \emph{blocks}, and one block which is referred to as the
\emph{genesis block}.  Unless it is the genesis block, each block $B$
has a unique \emph{parent} block $\texttt{Par}(B)$, which must be
uniquely specified within the block message. Each block is signed and broadcast by
a single key, $\texttt{Miner}(B)$, but may contain other broadcast
messages which have been signed and broadcast by other keys.  No block can be
broadcast by the processor $p$ that controls $\texttt{Miner}(B)$ at a point strictly prior
to that at which its parent enters $p$'s message state (it is convenient to consider the genesis block a member of all message states at all timeslots). 
$\texttt{Par}(B)$ is defined to be an \emph{ancestor} of $B$, and all
of the ancestors of $\texttt{Par}(B)$ are also defined to be ancestors
of $B$. If $B$ is not the genesis block, then it must have the genesis
block as an ancestor. At any point during the duration, the set of
broadcast blocks thus forms a tree structure.  If $M$ is a set of messages, then we say that it is \emph{downward closed} if it
contains the parents of all blocks in $M$. By a \emph{leaf} of $M$, we
mean a block in $M$ which is not a parent of any block in
$M$. If $M$ is downward closed set of blocks and contains a single leaf, then we
say that $M$ is a \emph{chain}.
 
 \textbf{Generalising the model to DAGs.} It is only for the sake of simplicity that we assume each block has a unique parent block.
The model is chosen to be a sweet spot of being expressible enough to capture many different types of blockchains and not so cumbersome as to obscure the main issues. Only small modifications are then required to deal with DAGS etc.
\subsection{The extended protocol and the meaning of probabilistic
  statements} \label{probs}

To define what it means for a protocol to be secure or live,
we first need a \emph{notion of confirmation} for blocks. This is a
function $ \mathtt{C}$ mapping any message state to a chain that is a
subset of that message state, in a manner that depends on the
 protocol inputs, including a parameter $\varepsilon
>0 $ called the \emph{security parameter}.  The intuition behind
$\varepsilon$ is  that it should upper bound the probability of false
confirmation. Given any message state, $ \mathtt{C}$ returns the set
of confirmed blocks.

In Section \ref{RP}, we stipulated that a permissionless protocol is a  pair $\mathtt{P}=(\mathtt{S},\mathtt{O})$. 
 In general, however, a protocol might only be considered to run relative to a specific notion of confirmation $\mathtt{C}$. We will refer to the triple $(\mathtt{S},\mathtt{O},\mathtt{C})$ as the \emph{extended protocol}.   Often we will suppress explicit mention of $\mathtt{C}$, and assume it to be implicitly attached to a given protocol. We will talk about a protocol being live, for example, when it is really the extended protocol to which the definition applies. It is important to understand, however, that the notion of confirmation $\mathtt{C}$ is separate from $\mathtt{P}$, and does not impact the instructions of the protocol. In principle, one can run the same Bitcoin protocol relative to a range of different notions of confirmation. While the set of confirmed blocks might depend on $\mathtt{C}$, the instructions of the protocol do not, i.e.\ with Bitcoin, one can require five blocks for confirmation or ten, but this does not affect the process of building the blockchain.

For a given permissionless protocol, another way to completely specify an execution (beyond that described in Section \ref{cm}) is via the following breakdown: 
\begin{enumerate} 
\item[(I1)] The determined variables (such as $\Delta$ and $\varepsilon$);
\item[(I2)] The set of processors and their public keys; 
\item[(I3)] The state transition diagram for the adversary $p_A$; 
\item[(I4)] The resource pool (which may or may not be undetermined); 
\item[(I5)] The timing rule;
\item[(I6)] The probabilistic responses of the permitter. 
\end{enumerate} 

With respect to the extended protocol $(\mathtt{S},\mathtt{O},\mathtt{C})$, we call a particular set of choices for (I1)- (I5) a \emph{protocol instance}. Generally, when we discuss an extended protocol, we do so within the context of a \emph{setting}, which constrains the set of possible protocol instances.  The setting might restrict the set of resource pools to those in which the adversary is given a limited resource balance, for example. When we make a probabilistic statement to the effect that a certain condition holds with at most/least a certain probability, this means that the probabilisitic bound holds for all protocol instances consistent with the setting. 
Where convenient, we may also refer to the pair $(\mathtt{P},\mathtt{C})$ as the extended protocol, where $\mathtt{P}=(\mathtt{S},\mathtt{O})$.

\subsection{Defining the timed, sized and multi-permitter settings}  \label{PoWPoS}

In Section \ref{RP}, we gave an example to show how the framework of \cite{lewis2021byzantine}  can be used to model a PoW protocol like Bitcoin. In that context the resource pool is a function $\mathcal{R}: \mathcal{U} \times \mathcal{D}  \rightarrow \mathbb{R}_{\geq 0}$, which is best modelled as undetermined, because one does not know in advance how the hashrate of each public key (or even the total hashrate) will vary over time. The first major difference for a PoS protocol is that the resource balance of each public key now depends on the message state (as is also the case for some proof-of-space protocols, depending on the implementation), and may also be a function of time.\footnote{It is standard practice in PoS blockchain protocols to require a participant to have a currency balance that has been recorded in the blockchain for at least a certain minimum amount of time before they can produce new blocks, for example. So, a given participant may not be permitted to extend a given chain of blocks at timeslot $t$,  but may be permitted to extend the same chain at a later timeslot $t'$.} So the resource pool is a function $\mathcal{R}: \mathcal{U} \times \mathcal{D} \times \mathcal{M}
\rightarrow \mathbb{R}_{\geq 0}$. A second difference is that $\mathcal{R}$ is determined, because one knows from the start how the resource balance of each participant depends on the message state as a function of time. Note that advance knowledge of $\mathcal{R}$ \emph{does not} mean that one knows from the start which processors will have large resource balances throughout the execution, unless one knows which messages will be broadcast. A third difference, to which we have already alluded, is that PoS protocols are best modelled in the timed setting. A fourth difference is that PoW protocols are best modelled by allowing a single request to the oracle for each public key at each timeslot, while this is not necessarily true of PoS protocols. 

In \cite{lewis2021byzantine}, the sized/unsized, timed/untimed,  and single/multi-permitter settings were defined to succinctly capture these differences. The idea is that all permissionless protocols run relative to a resource pool and the difference between PoW and PoS  and other permissionless protocols is whether we are working in the  sized/unsized, timed/untimed,  and single/multi-permitter settings.  If one then comes to consider a new form of protocol, such as proof-of-space,  theorems that have been proved for all protocols in the unsized setting (for example) will still apply,  so long as these new protocols are appropriately modelled in that setting. 
 So the point of this approach is that, by blackboxing the precise mechanics of the processor selection process (whereby processors are selected to do things like broadcast new blocks of transactions), we are able to focus instead on \emph{properties} of the selection process that are relevant for protocol design. This allows for the development of a general theory that succinctly describes the relevant merits of different forms of protocol. 
 The  sized/unsized, timed/untimed,  and single/multi-permitter settings are defined below.

\begin{enumerate} 
\item  \textbf{The timed and untimed settings}.  There are two differences between the timed and untimed settings. The first concerns the form of requests, as detailed in Section \ref{RP}. We also require that the following holds in the timed setting: For each broadcast message $m$, there exists a unique timeslot $t_m$  such that permission to broadcast $m$ was given in response to some request $(t_m,\mathtt{U},M,A)$, and $t_m$  is computable from $m$. We call $t_m$ the \emph{timestamp} of $m$.

\item  \textbf{The sized and unsized settings}.  We call the setting \emph{sized} if the resource balance is determined. By the \emph{total resource balance} we mean the function $\mathcal{T}: \mathbb{N} \times \mathcal{M} \rightarrow \mathbb{R}_{>0}$ defined by $\mathcal{T}(t,M):= \sum_{\mathtt{U}} \mathcal{R}(\mathtt{U},t,M)$.  For the unsized setting,
$\mathcal{R}$ and $\mathcal{T}$ are undetermined, with the only restrictions being:
\begin{enumerate} 
\item[(i)]  $\mathcal{T}$ only takes values in a determined interval 
 $[\alpha_0,\alpha_1]$, where $\alpha_0>0$ (meaning that, although $\alpha_0$ and $\alpha_1$ are determined, protocols will be required to function for all possible $\alpha_0>0$ and $\alpha_1>\alpha_0$, and for all undetermined $\mathcal{R}$ consistent with $\alpha_0,\alpha_1$, subject to (ii) below).\footnote{We consider resource pools with range restricted in this way, because it turns out to be an overly strong condition to require a protocol to function without
 \emph{any} further conditions on the resource pool, beyond
 the fact that it is a function to $\mathbb{R}_{\geq 0}$. Bitcoin will certainly fail  if the total resource balance decreases
sufficiently quickly over time, or if it increases too quickly, causing blocks to be produced too quickly compared to $\Delta$.}
\item[(ii)] There may also be bounds placed on the resource balance of public keys owned by the adversary. 
\end{enumerate} 

\item  \textbf{The multi-permitter and single-permitter settings}. In the \emph{single-permitter} setting, each processor may submit a single request of the form $(\mathtt{U},M,A)$ or $(t,\mathtt{U},M,A)$ (depending on whether we are in the timed setting or not) for each $\mathtt{U}\in \mathcal{U}_p$ at each timeslot, and it is allowed that $A\neq \emptyset$. In the \emph{multi-permitter} setting, processors can submit any number of requests for each key at each timeslot, but they must all satisfy the condition that $A=\emptyset$.
\end{enumerate} 

PoW protocols will generally be best modelled in the untimed, unsized and single-permitter settings. They are best modelled in the untimed setting, because a processor's probability of being granted permission to broadcast a block at timeslot $t$ (even if that block has a different timestamp) depends on their resource balance at $t$, rather than at any other timeslot. They are best modelled in the unsized setting, because one does not know in advance of the protocol execution the amount of mining which will take place at a given timeslot in the future. They are best modelled in the single-permitter setting, so long as permission to broadcast is block-specific. 

PoS protocols are generally best modelled in the timed, sized and multi-permitter settings. They are best modelled in the timed setting, because blocks will generally have non-manipulable timestamps, and because a processor's ability to broadcast a block may be determined at a timestamp $t$ even through the probability of success depends on their resource balance at $t'$ other than $t$.  They are best modelled in the sized setting, because the resource pool is known from the start of the protocol execution. They are best modelled in the multi-permitter setting, so long as permission to broadcast is not block-specific, i.e.\ when permission is granted, it is to broadcast a range of permissible blocks at a given position in the blockchain. 

All of this means that it will generally be straightforward to classify protocols with respect to the theorems from this paper that apply to them. Since Bitcoin and Prism \cite{bagaria2019prism} are PoW protocols, for example, Theorem \ref{t2} applies to those protocols. Since Snow White, Ouroboros \cite{kiayias2017ouroboros} and Algorand are PoS protocols, Theorems \ref{strongsame} and \ref{t3} apply to those protocols. Note that there are a large number of protocols, such as Tendermint \cite{buchman2016tendermint} and Hotstuff \cite{yin2019hotstuff}, which are formally described as permissioned protocols, but which can be implemented as PoS protocols so that  Theorems  \ref{strongsame} and \ref{t3} will then apply.

\subsection{Defining liveness}   \label{defl}
There are a number of papers that successfully describe liveness and security notions for blockchain protocols \cite{garay2018bitcoin,WHGSW16}. Our interest here is in identifying the simplest definitions that suffice to express our later results. To this end, it will be convenient to give a definition of liveness that is more fine-grained than previous definitions, in the sense that it allows us to separate out the security parameter and the number of timeslots in the duration (in previous accounts the number of timeslots in the duration is a function of the security parameter).  
Consider a protocol  with a  notion of confirmation $\mathtt{C}$, and let $|\mathtt{C}(M)|$ denote the number of blocks in $\mathtt{C}(M)$ for any message state $M$. For  timeslots $t_1<t_2$,  let $l_1$ be the maximum value $|\mathtt{C}(M_1)|$ for any $M_1$ which is a message state of any processor at any timeslot $t\leq t_1$, and let $l_2$ be the minimum value $|\mathtt{C}(M_2)|$ for any $M_2$ which is a message state of any processor at timeslot $t_2$.
We say that $[t_1,t_2]$ is a \emph{growth interval}  if $l_2>l_1$. For any duration $\mathcal{D}$, let $|\mathcal{D}|$ be the number of timeslots in $\mathcal{D}$. For $\ell_{\varepsilon,\mathcal{D}}$ which takes values in $\mathbb{N}$ depending on $\varepsilon$ and $\mathcal{D}$, let us say that $\ell_{\varepsilon,\mathcal{D}}$ is \emph{sublinear} in $\mathcal{D}$ if, for each $\varepsilon>0$ and each $\alpha \in (0,1)$,  $\ell_{\varepsilon,\mathcal{D}}<\alpha |\mathcal{D}|$ for all sufficiently large values of $|\mathcal{D}|$ (the motivation for considering sublinearity will be described shortly).

\begin{definition} \label{live} 
A protocol is \textbf{live} if, for every choice of security parameter  $\varepsilon>0$ and duration $\mathcal{D}$, there exists $\ell_{\varepsilon, \mathcal{D}}$, which is sublinear in $\mathcal{D}$, and such that for each pair of timeslots  $t_1< t_2\in \mathcal{D} $ the following holds with probability at least $1-\varepsilon$:  If  $t_2-t_1\geq \ell_{\varepsilon,\mathcal{D}}$ and $[t_1,t_2]$ is entirely synchronous, then $[t_1,t_2]$ is a growth interval.
\end{definition}

\noindent So, roughly speaking, a protocol is live if the number of
confirmed blocks can be relied on to grow during synchronous
intervals of sufficient length. The reason we require
$\ell_{\varepsilon, \mathcal{D}}$ to be sublinear in $\mathcal{D}$ is
so that the number of confirmed blocks likely increases with
sufficient increase in synchronous duration.
For example, a protocol that confirms a block with probability only
$2^{-|\mathcal{D}|}$ at each timeslot  should not be considered live.
  Note also, that while Definition \ref{live} only refers
explicitly to protocols, it is really the \emph{extended protocol} to
which the definition applies. The following stronger notion will also
be useful.

\begin{definition} \label{slive} 
A protocol is \textbf{uniformly live} if, for every choice of security parameter  $\varepsilon>0$ and duration $\mathcal{D}$, there exists $\ell_{\varepsilon, \mathcal{D}}$, which is sublinear in $\mathcal{D}$, and such that the following holds with probability at least $1-\varepsilon$: For all pairs of  timeslots $t_1< t_2\in \mathcal{D}$, if $t_2-t_1\geq \ell_{\varepsilon,\mathcal{D}}$ and $[t_1,t_2]$ is entirely synchronous, then $[t_1,t_2]$ is a growth interval.
\end{definition}

The difference between being live and uniformly live is that the
latter definition requires that, with probability at least
$1-\varepsilon$, \emph{all} appropriate intervals are growth
intervals. The former definition only requires the probabilistic bound
to hold for each interval individually. The reader's immediate
reaction might be that it should follow from the Union Bound that
Definitions \ref{live} and \ref{slive} are essentially
equivalent. This is not so. Firstly, this is because the protocol and
notion of confirmation take the security parameter $\varepsilon$ as
input. Nevertheless, one might think that if a protocol is live then a
`recalibration', which takes some appropriate transformation of the
security parameter as input, should necessarily be uniformly
live. This does not follow (in part) because there is no guarantee
that the resulting $\ell_{\varepsilon, \mathcal{D}}$ will be sublinear
in $\mathcal{D}$ -- see Section \ref{sh} for a detailed analysis.

\subsection{Defining security} \label{sec}

Roughly speaking, \emph{security} requires that confirmed blocks
normally belong to the same chain. Let us say that two distinct blocks
are \emph{incompatible} if neither is an ancestor of the other, and
are {\em compatible} otherwise.  Suppose that, for some processor $p$,  the message state at  $t$ is $M$. If $B\in \mathtt{C}(M)$,  
then we say that $B$ is confirmed for $p$ at $t$.

\begin{definition}[Security] \label{1-secure} A protocol is \textbf{secure} if the following holds for every choice of security parameter  $\varepsilon>0$, for every $p_1,p_2$ and for all timeslots $t_1,t_2$ in the duration:  With probability $> 1-\varepsilon$, all blocks which are confirmed for $p_1$ at $t_1$ are compatible with all those which are  confirmed for  $p_2$ at~$t_2$.   
\end{definition}

%

\noindent The following stronger notion will also be useful. 

\begin{definition}[Uniform Security] \label{3-secure} A protocol is \textbf{uniformly secure} if the following holds for every choice of security parameter  $\varepsilon>0$:  With probability $> 1-\varepsilon$, there do not exist incompatible blocks $B_1,B_2$, timeslots $t_1,t_2$ and $p_1,p_2$ such that $B_i$ is confirmed for $p_i$ at $t_i$ for $i\in \{ 1,2 \}$.
\end{definition}

\noindent The difference between security and uniform security is that the latter requires the probability of even a single disagreement to be bounded, while the former only bounds the probability of disagreement for  each pair of processors at each timeslot pair. Just as for liveness and uniform liveness, it does not follow from the Union Bound that security is essentially equivalent to uniform security. In Section \ref{sh} we will perform a detailed analysis of the relationship between these notions.

\section{Certificates in the partially synchronous setting} \label{pss}

The definitions of this and subsequent sections are all new to this paper, unless explicitly stated otherwise. The rough idea is that `certificates' should be proofs of confirmation. Towards formalising this idea, let us first consider a version which is too weak.

\begin{definition} \label{subj}  If $B\in \mathtt{C}(M)$ then we refer to $M$ as a \textbf{subjective certificate} for $B$. 
\end{definition}

\noindent We will say that a set of messages $M$ is broadcast if every member is broadcast, and that $M$ is broadcast by timeslot $t$ if every member of $M$ is broadcast at a timeslot $\leq t$ (different members potentially being broadcast at different timeslots). If $M$ is a subjective certificate for $B$, then there might exist $M'\supset M$ for which $B\notin \mathtt{C}(M')$. So the fact that  $M$ is broadcast does not constitute proof that $B$ is confirmed with respect to any processor. When do we get harder forms of proof than subjective certificates? Definition \ref{cert} below gives a natural and very simple way of formalising this. 

\begin{definition} \label{cert} 
We say that a protocol with a notion of confirmation $\mathtt{C}$ \textbf{produces certificates} if the following holds with probability $>1-\varepsilon$ when the protocol is run with security parameter $\varepsilon$: There do not exist incompatible blocks $B_1,B_2$, a timeslot $t$ and $M_1,M_2$ which are broadcast by $t$, such that $B_i\in \mathtt{C}(M_i)$ for $i\in \{ 1,2 \}$.  
\end{definition} 
\noindent It is important to stress that, in the definition above, the
$M_i$'s are not necessarily the message states of any processor, but are
rather arbitrary subsets of the set of all broadcast messages. The
basic idea is that, if a protocol produces certificates, then
subjective certificates constitute proof of confirmation. Algorand is
an example of a protocol which produces certificates: The protocol is
designed so that it is unlikely that two incompatible blocks will be
produced at any point in the duration together with appropriate
committee signatures verifying confirmation for each.

Our next aim is to show that, in the partially synchronous setting, producing certificates is equivalent to security.
In fact, producing certificates is clearly at least as strong as uniform security, so it suffices to show that if a protocol is secure then it must produce certificates.  

\begin{theorem} \label{strongsame} If a protocol is secure in the partially synchronous setting then it produces certificates. 
\end{theorem}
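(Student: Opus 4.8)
The plan is to prove the contrapositive: assuming the protocol does not produce certificates, I will construct a protocol instance, consistent with the partially synchronous setting, that violates Definition \ref{1-secure}. By the negation of Definition \ref{cert}, there is a security parameter $\varepsilon > 0$ and an instance $I$ in which, with probability at least $\varepsilon$, there exist incompatible blocks $B_1, B_2$, a timeslot $t$, and message sets $M_1, M_2$ broadcast by $t$ with $B_i \in \mathtt{C}(M_i)$. The whole difficulty is that the $M_i$ need not be the message state of any processor in $I$; my job is to manufacture an execution in which they genuinely are, so that the subjective certificates of Definition \ref{subj} become actual confirmations for honest processors.

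The key device is a pair of inert listeners. I would augment $I$ with two fresh honest processors $p_1, p_2$ whose public keys all carry zero resource balance. Since such a processor can neither obtain permissions from the permitter nor rebroadcast messages signed by keys outside its own $\mathcal{U}_{p_i}$, it broadcasts nothing, and hence its message state at any timeslot equals exactly the set of messages delivered to it. Adding such processors preserves conditions (a)--(c) on the resource pool and the $q$-boundedness of the adversary, and perturbs nothing else (no processor ever receives a message from $p_1$ or $p_2$), so $M_1, M_2$ are still broadcast by $t$ with probability at least $\varepsilon$. I then exploit partial synchrony: I declare the interval after $t$ to be asynchronous and, using the adversary's control over message delivery, route to $p_1$ exactly the messages of $M_1$ and to $p_2$ exactly the messages of $M_2$, completing this by some fixed timeslot $t^\ast$. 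This is possible because only finitely many messages are broadcast by $t$, and each message of $M_i$ is among them and so is available for (re)delivery. At $t^\ast$ the message state of $p_i$ is precisely $M_i$, so $B_i \in \mathtt{C}(M_i)$ is confirmed for $p_i$ at $t^\ast$.

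Since $B_1$ and $B_2$ are incompatible, the blocks confirmed for $p_1$ at $t^\ast$ are not all compatible with those confirmed for $p_2$ at $t^\ast$. Note this genuinely requires two distinct message states, because $\mathtt{C}(M)$ is always a single chain and so can never on its own contain two incompatible blocks, which is exactly why one inert listener does not suffice. The event just described occurs whenever the bad-certificate event of $I$ occurs, so in the new instance the fixed tuple $(p_1, p_2, t^\ast, t^\ast)$ witnesses a security failure with probability at least $\varepsilon$, run at the \emph{same} security parameter $\varepsilon$ and hence with the same confirmation rule $\mathtt{C}$. This contradicts Definition \ref{1-secure}, establishing the theorem.

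The step I expect to be the main obstacle is making ``route exactly $M_i$ to $p_i$'' precise. Because $\mathtt{C}$ is not assumed monotone, I cannot simply flood $p_i$ with all broadcast messages and hope $B_i$ remains confirmed; the listener's state must equal $M_i$ on the nose. Yet $M_i$ is a function of the random run (the permitter responses), whereas a timing rule is formally a fixed function of $(p,p',m,t)$. Reconciling these is the crux: one must either invoke the licensed view of the adversary as choosing delivery in response to the execution, so that on each run the actual $M_1, M_2$ of that run are the sets delivered to $p_1, p_2$, or, insisting on a fixed timing rule, condition on the realized witness and carefully track how the probability bound and the security parameter transform. The unbounded asynchronous intervals afforded by partial synchrony are precisely what give the adversary enough freedom to carry out this routing; without them the construction fails, consistent with the fact that Bitcoin is secure in the synchronous setting yet produces no certificates.
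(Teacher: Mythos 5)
Your construction is, in its essentials, the paper's own proof: two extra honest processors whose keys carry zero resource balance (so they can never obtain permissions and never broadcast), an argument that adding them leaves the distribution of broadcast messages unchanged, and the use of asynchrony to hand $M_1$ to $p_1$ and $M_2$ to $p_2$, contradicting Definition \ref{1-secure} at the same $\varepsilon$. Moreover, the ``crux'' you leave open is resolved in the paper exactly by your first option: Section \ref{ps} explicitly licenses thinking of the adversary as controlling the timing rule in response to the execution, and the paper's proof simply stipulates that $p_1,p_2$ receive nothing until the last timeslot and then receive $M_1$ and $M_2$ ``(if they exist)''. So that worry is not a gap relative to the paper's own standard of rigor.

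There is, however, one step that fails as written: declaring only ``the interval after $t$'' to be asynchronous. First, $t$ is part of the random witness, so it cannot parametrize the synchronicity pattern, which is fixed at the level of the protocol instance. Second, and more damagingly, if $[0,t]$ is synchronous then every message broadcast at a timeslot $\leq t-\Delta$ is forcibly delivered to $p_1$ and $p_2$ by timeslot $t$; their message states at $t^{\ast}$ therefore properly contain $M_i$ whenever $M_i$ omits some early broadcast, and, as you yourself stress, since $\mathtt{C}$ is not monotone there is no reason a proper superset of $M_i$ still confirms $B_i$ (nor any reason both $M_1$ and $M_2$ should contain every early message). The repair is what the paper does: make \emph{all} timeslots of the new instance asynchronous (permitted in the partially synchronous setting, since asynchronous intervals are undetermined and unbounded), observe that witnesses ``broadcast by $t$'' are in particular witnesses ``broadcast by $t_{\text{last}}$'' so that the single fixed timeslot $t_{\text{last}}$ serves for every realization, and deliver nothing to $p_1,p_2$ before that point.
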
 
\begin{proof} Towards a contradiction, suppose that the protocol with notion of confirmation $\mathtt{C}$ is secure in the partially synchronous setting, but that there exists a protocol instance\footnote{See Section \ref{probs} for the definition of a protocol instance.} $\mathtt{In}_1$  with security parameter $\varepsilon$, such that the following holds with probability $\geq \varepsilon$:  There exist incompatible blocks $B_1,B_2$, a timeslot $t$ and $M_1,M_2$ which are broadcast by $t$, such that $B_i\in \mathtt{C}(M_i)$ for $i\in \{ 1,2 \}$. This means that the following holds with probability $\geq \varepsilon$ for $t_{\text{last}}$, which is the last timeslot in the duration: There exist incompatible blocks $B_1,B_2$ and $M_1,M_2$ which are broadcast by $t_{\text{last}}$, such that $B_i\in \mathtt{C}(M_i)$ for $i\in \{ 1,2 \}$.
Consider the protocol instance $\mathtt{In}_2$ which has the same values for determined variables as $\mathtt{In}_1$,  the same state transition diagram for the processor of the adversary and the same set of processors with the same set of  public keys, except that now there are two extra processors $p_1$ and $p_2$. Suppose that the resource pool for $\mathtt{In}_2$ is the same as that for $\mathtt{In}_1$ when restricted to public keys other than those in $\mathcal{U}_{p_1}$ and $\mathcal{U}_{p_2}$, and that all keys in $\mathcal{U}_{p_1}$ and $\mathcal{U}_{p_2}$ have zero resource balance throughout the duration. Suppose further, that the timing rule for $\mathtt{In}_2$ is the same as that for $\mathtt{In}_1$ when restricted to tuples $(p,p',m,t)$  such that $p\notin \{  p_1,p_2 \}$ and  $p'\notin \{ p_1,p_2 \}$, but that now all timeslots are asynchronous.  
 According to the definition of Section \ref{RP}, and since all keys in $\mathcal{U}_{p_1}$ and $\mathcal{U}_{p_2}$ have zero resource balance throughout the duration,  it follows by induction on timeslots that the probability distribution on the set of broadcast messages is the same at each timeslot for $\mathtt{In}_2$ as for $\mathtt{In}_1$, independent of which messages are received by $p_1$ and $p_2$.  
It therefore holds  for the protocol instance $\mathtt{In}_2$ that with probability $\geq \varepsilon$ there  exist incompatible blocks $B_1,B_2$,  and $M_1,M_2$ which are broadcast by $t_{\text{last}}$, such that $B_i\in \mathtt{C}(M_i)$ for $i\in \{ 1,2 \}$.  Now suppose that $p_1$ and $p_2$ do not receive any messages until $t_{\text{last}}$, and then receive the message sets  $M_1$ and $M_2$ (if they exist) respectively. This suffices to demonstrate that the definition of security is violated with respect to $t_{\text{last}}$, $\varepsilon$, $p_1$ and $p_2$. 
\end{proof} 

\begin{corollary} \label{collapse}
Security and uniform security are equivalent in the partially synchronous setting.
\end{corollary}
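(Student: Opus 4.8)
The plan is to establish the two implications separately, using the production of certificates as a bridge for the harder direction. The easy direction is that uniform security implies security: for any fixed $p_1,p_2$ and timeslots $t_1,t_2$, the event that there exist incompatible blocks $B_1,B_2$ with $B_i$ confirmed for $p_i$ at $t_i$ is contained in the global event ruled out by uniform security (Definition \ref{3-secure}). Hence if the latter event has probability $<\varepsilon$, so does the former, for every choice of the tuple, which is exactly Definition \ref{1-secure}. No union bound is needed, because this is a containment of events for each fixed tuple, so the same $\varepsilon$ transfers directly.

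For the converse, I would chain through certificates. By Theorem \ref{strongsame}, security in the partially synchronous setting implies that the protocol produces certificates (Definition \ref{cert}). It then remains to argue that producing certificates implies uniform security -- the observation flagged as ``clear'' just before Theorem \ref{strongsame}. The key step is to observe that if $B_i$ is confirmed for $p_i$ at $t_i$, then the message state $M_i$ of $p_i$ at $t_i$ is a subjective certificate for $B_i$ (i.e.\ $B_i\in\mathtt{C}(M_i)$), and moreover $M_i$ is broadcast by $t_i$: every message in a processor's message state has either been received (hence broadcast strictly earlier) or broadcast by that processor by $t_i$. Taking $t:=\max\{t_1,t_2\}$, both $M_1$ and $M_2$ are broadcast by $t$, so any uniform-security violation immediately produces incompatible $B_1,B_2$ and broadcast-by-$t$ message sets witnessing a certificate-production violation. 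Thus the bad event for uniform security is contained in the bad event for certificate production, and the probability bound transfers with the same $\varepsilon$.

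Assembling these, one obtains Security $\Rightarrow$ produces certificates $\Rightarrow$ uniform security $\Rightarrow$ security, giving the claimed equivalence. The main subtlety worth flagging is that this equivalence does \emph{not} proceed via a union bound -- indeed the text explicitly warns that security and uniform security are not union-bound-equivalent in general. What makes the collapse occur specifically in the partially synchronous setting is Theorem \ref{strongsame}: partial synchrony forces secure protocols to produce certificates, and certificate production is an inherently uniform (global over all processors and timeslots) guarantee, so it carries the same $\varepsilon$ through to uniform security without any loss. The one technical point to verify with care is the claim that a processor's message state is broadcast by the current timeslot, which follows directly from the receipt/broadcast bookkeeping of Section \ref{cm}; everything else is event containment.
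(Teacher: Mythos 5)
Your proof is correct and follows essentially the same route as the paper: the paper's proof of Corollary \ref{collapse} is precisely the chain security $\Rightarrow$ certificates (Theorem \ref{strongsame}) $\Rightarrow$ uniform security $\Rightarrow$ security, with the middle implication stated as ``clear.'' Your only addition is to spell out that ``clear'' step (a processor's message state at $t$ consists of messages broadcast by $t$, so any uniform-security violation is a certificate-production violation), which is a faithful elaboration rather than a different argument.
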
 
\begin{proof} 
This follows from Theorem \ref{strongsame} and the fact that  producing certificates clearly implies uniform security.
\end{proof}

\section{Security and uniform security in the synchronous setting} \label{sh}

Having dealt with the partially synchronous setting, our next task is to consider the synchronous setting. To do so, however, we first need to formalise the notion of a \emph{recalibration}.

\subsection{Defining recalibrations} \label{recal} Theorem \ref{strongsame} seems to tie things up rather neatly for the partially synchronous setting. In particular, the equivalence of security and uniform security meant that we were spared having to carry out a separate analysis for each security notion. It is not difficult to see, however, that the two security notions will not be equivalent in the synchronous setting. To see this, we can consider the example of Bitcoin. Suppose first that we operate in the standard way for Bitcoin, and use a notion of confirmation $\mathtt{C}$ that depends only on the security parameter $\varepsilon$, and not on the duration $\mathcal{D}$, so that the number of blocks required for confirmation is just a function of $\varepsilon$. In this case, the protocol is secure in the synchronous setting \cite{garay2018bitcoin}. 
It is also clear, however,  that this protocol will not be uniformly secure in a setting where the adversary controls a non-zero amount of mining power: If a fixed number of blocks are required for confirmation then, given enough time, the adversary will eventually complete a double spend (i.e.\ the adversary will double spend with probability tending to 1 as the number of timeslots tends to infinity). That said, it is also not difficult to see how one might  `recalibrate' the protocol to deal with different durations -- to make the protocol uniformly secure, the number of blocks required for confirmation should be a function of both $\varepsilon$ and $\mathcal{D}$. 

The point of this subsection is to formalise the idea of recalibration and to show that, if a protocol is secure, then (under fairly weak conditions) a recalibration will be uniformly secure. The basic idea is very simple -- one runs the initial (unrecalibrated) protocol for smaller values of $\varepsilon$ as the duration increases, but one has to be careful that the resulting $\ell_{\varepsilon,\mathcal{D}}$ is sublinear in $\mathcal{D}$. 
 
\begin{definition} 
We say  $(\mathtt{P}_2,\mathtt{C}_2)$ is a recalibration of the extended protocol $(\mathtt{P}_1,\mathtt{C}_1)$ if running $\mathtt{P}_2$ given certain inputs means running $\mathtt{P}_1$ for a computable transformation of those inputs, and then terminating after $|\mathcal{D}|$ many steps are complete. 

 \end{definition}

\noindent So, if running $\mathtt{P}_2$ with security parameter
$\varepsilon$ and for $n$ many timeslots means running $\mathtt{P}_1$
with input parameters that specify a security parameter
$\varepsilon/10$ and that specify a duration consisting of  $2n$ many
timeslots, and then terminating after $n$ many timeslots have been
completed, then $\mathtt{P}_2$ is a recalibration of $\mathtt{P}_1$.\footnote{The choices $\varepsilon/10$ and $2n$ are arbitrarily chosen for the purpose of example. The reader might wonder why one should specify a duration of $2n$ timeslots and then terminate after $n$ many. This is because the instructions of the first $n$ timesteps can depend on the intended duration. In Algorand, committee sizes will depend on the intended duration, for example.}  Note also, that we allow the recalibration to use a different notion of confirmation. 

In the following, we say that $\ell_{\varepsilon,\mathcal{D}}$ is independent of $\mathcal{D}$ if 
$\ell_{\varepsilon,\mathcal{D}} = \ell_{\varepsilon,\mathcal{D}'}$ for all $\varepsilon>0$ and all $\mathcal{D},\mathcal{D}'$.  When $\ell_{\varepsilon,\mathcal{D}}$  is independent of $\mathcal{D}$, we will often write $\ell_{\varepsilon}$ for $\ell_{\varepsilon,\mathcal{D}}$.
 
 \begin{definition} 
 In the \textbf{bounded user} setting we assume that there is a finite upper bound on the number of processors, which holds for all protocol instances.\footnote{Note that the requirement here is that the number of processors is bounded, rather than the number of public keys.} 
 \end{definition}

\begin{proposition} \label{recal}
Consider the synchronous and bounded user setting. Suppose $\mathtt{P}$ satisfies liveness with respect to $\ell_{\varepsilon,\mathcal{D}}$,  that $\ell_{\varepsilon,\mathcal{D}}$  is independent of $\mathcal{D}$,  and that for each $\alpha >0$, $ \ell_{\varepsilon} <\alpha \varepsilon^{-1}$  for all sufficiently small $\varepsilon>0$.   If $\mathtt{P}$ is secure, there exists a recalibration of $\mathtt{P}$ that is uniformly live and uniformly secure. 
\end{proposition}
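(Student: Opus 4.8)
The plan is to take the recalibration that, on input security parameter $\varepsilon$ and duration $\mathcal{D}$, simply runs $\mathtt{P}$ with the smaller security parameter $\varepsilon':=\varepsilon/(c\,|\mathcal{D}|)$, where $c\ge 1$ is a constant absorbing the universal bound $N$ on the number of processors; since $\ell_{\varepsilon,\mathcal{D}}$ is assumed independent of $\mathcal{D}$, the (transformed) duration fed to $\mathtt{P}$ is immaterial and we may keep the same notion of confirmation. For a processor $p$ and timeslot $t$ write $C^p_t:=\mathtt{C}(M)$, where $M$ is the message state of $p$ at $t$. The recalibrated protocol run with $(\varepsilon,\mathcal{D})$ is then distributed exactly as $\mathtt{P}$ run with $(\varepsilon',\mathcal{D})$, so every guarantee of $\mathtt{P}$ applies at parameter $\varepsilon'$, with liveness parameter $\ell':=\ell_{\varepsilon'}$.

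First I would dispatch sublinearity, which is the subtlety the informal discussion flags. Given $\alpha\in(0,1)$ and $\varepsilon>0$, apply the hypothesis with $\alpha'=\alpha\varepsilon/(2c)$: for all sufficiently small $\eta$ we have $\ell_\eta<\alpha'\eta^{-1}$, so once $|\mathcal{D}|$ is large enough that $\varepsilon'=\varepsilon/(c|\mathcal{D}|)$ is this small, $\ell'<\alpha'\cdot c|\mathcal{D}|/\varepsilon=(\alpha/2)|\mathcal{D}|<\alpha|\mathcal{D}|$. Hence $\ell'$ is sublinear in $\mathcal{D}$, so the recalibration is a legitimate protocol to which Definitions \ref{live} and \ref{slive} apply. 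This is precisely where the hypothesis $\ell_\varepsilon=o(\varepsilon^{-1})$ is used, and it is calibrated to a union bound that loses only a \emph{linear} (not quadratic) factor in $|\mathcal{D}|$.

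For uniform liveness I would reduce to $O(|\mathcal{D}|)$ events using the fact that growth intervals \emph{compose}. The quantity $l_1$ in Definition \ref{live} is a maximum of confirmed lengths over all message states up to $t_1$, hence monotone non-decreasing in $t_1$; so if $[t_1,t_1+\ell']$ and $[t_2-\ell',t_2]$ are growth intervals then so is $[t_1,t_2]$ whenever $t_2\ge t_1+\ell'$. Thus it suffices that every \emph{minimal} interval $[t_1,t_1+\ell']$ be a growth interval, and there are at most $|\mathcal{D}|$ of these. By liveness of $\mathtt{P}$ at parameter $\varepsilon'$ each fails with probability at most $\varepsilon'$, so a union bound gives failure probability at most $|\mathcal{D}|\varepsilon'\le\varepsilon/c\le\varepsilon$, establishing uniform liveness.

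For uniform security I would again aim for an $O(|\mathcal{D}|)$ union bound, and \textbf{this is the main obstacle}: the naive argument ranges over all $\Theta(|\mathcal{D}|^2)$ pairs of timeslots, which with $\varepsilon'\sim\varepsilon/|\mathcal{D}|$ would only give $\ell_{\varepsilon'}=o(|\mathcal{D}|^2)$ — too weak to be sublinear — so the quadratic factor must be eliminated. The plan is to use uniform liveness to \emph{localize} conflicts. Condition on the uniform-liveness event, and consider the first timeslot $t^*$ at which some confirmed chain conflicts with an earlier confirmed one; up to $t^*-1$ all confirmed chains lie on a single path, whose tip is the deepest chain confirmed so far, and by the growth guarantee that deepest chain must have been confirmed within the preceding $\ell'$ timeslots (otherwise a length-$\ell'$ growth interval would already have produced a deeper one). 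Hence the conflict at $t^*$ is a pairwise violation of Definition \ref{3-secure} between a confirmation at $t^*$ and a \emph{recent} earlier confirmation. The delicate work is to organise this so that, summed over the $|\mathcal{D}|$ choices of $t^*$, the total number of pairwise-security events stays $O(|\mathcal{D}|)$ rather than $O(|\mathcal{D}|\cdot\ell')$: the idea is that on the single-path event the deepest confirmed chain is prefix-monotone, so each new confirmation need only be tested against the confirmed chains at a bounded number of fixed checkpoint times (for instance the final timeslot together with the nearest $\ell'$-spaced checkpoint) rather than against the whole preceding window. I expect the trickiest corner to be the final $O(\ell')$ timeslots, which admit no later growth witness to anchor them; here one exploits that $\ell'$ is sublinear so this window is asymptotically negligible, together with the stability of the confirmed chain forced by running $\mathtt{P}$ at the much smaller parameter $\varepsilon'$ (alternatively, since the recalibration is permitted to alter $\mathtt{C}$, one may instead pass to a stabilised confirmation rule for which confirmed chains never shrink, collapsing uniform security to consistency of the $N$ final chains). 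Once uniform security holds with failure probability $O(|\mathcal{D}|)\,\varepsilon'\le\varepsilon$, combining it with the uniform-liveness bound completes the proof.
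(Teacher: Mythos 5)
Your sublinearity computation and your uniform-liveness argument are correct and essentially match the paper's proof: the paper likewise runs $\mathtt{P}$ at security parameter $\Theta(\varepsilon/|\mathcal{D}|)$ (absorbing the bounded number of processors via an intermediate ``timeslot security'' notion), and likewise exploits the fact that the quantity $l_1$ in Definition \ref{live} is monotone in $t_1$, so that a union bound over $O(|\mathcal{D}|)$ intervals suffices. The genuine gap is in the uniform-security half, which is exactly where you stop proving and start planning, and it is created by your very first design decision: you fix the recalibration to run $\mathtt{P}$ with the \emph{same} duration, remarking that the transformed duration is ``immaterial''. It is not; it is the crux. The paper's recalibration, on input $(\varepsilon_0,\mathcal{D}_0)$ with $n=|\mathcal{D}_0|$, runs $\mathtt{P}$ with $\varepsilon_1=\varepsilon_0/2n$ and a \emph{strictly longer} duration $\mathcal{D}_1$ with $|\mathcal{D}_1|>n+\ell_{\varepsilon_1}$, terminating after $n$ timeslots (this is precisely what the definition of recalibration and its footnote were crafted to permit). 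One then analyses the hypothetical full run of length $|\mathcal{D}_1|$: except with probability $\varepsilon_1$, the interval $[t_{\text{last}}-\ell_{\varepsilon_1},t_{\text{last}}]$, which lies entirely \emph{after} $\mathcal{D}_0$, is a growth interval, so every confirmed chain at $t_{\text{last}}$ is strictly longer than any chain confirmed during $\mathcal{D}_0$; and pairwise (timeslot) security applied to the $n$ pairs $(t_1,t_{\text{last}})$, $t_1\in\mathcal{D}_0$, makes every block confirmed during $\mathcal{D}_0$ compatible with that final chain, hence --- by the length comparison --- an ancestor of its leaf, hence a member of it. Any two blocks confirmed during $\mathcal{D}_0$ therefore lie on one chain, with only $n+1$ events in the union bound, each of probability at most $\varepsilon_1$. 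In particular, every confirmation in $\mathcal{D}_0$, including those in the final $\ell'$ timeslots, has a later anchor, because the anchor lives outside $\mathcal{D}_0$.

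Neither of your proposed patches for the final window closes the gap. First, sublinearity of $\ell'$ bounds the \emph{length} of the unanchored window, not the \emph{probability} of a violation inside it; to bound that probability without a later anchor you must union-bound pairwise security over all pairs of timeslots meeting the window, i.e.\ $\Theta(\ell'^{2})$ events, and under the proposition's weak hypothesis --- which allows, say, $\ell_\eta\sim\eta^{-1/2}$, whence $\ell'^{2}\varepsilon'=\Theta(1)$ --- this is not small. (Your checkpoint scheme also has a soundness issue even away from the boundary: compatibility of two later confirmations with a common \emph{earlier} checkpoint chain does not make them compatible with each other, since both may extend the checkpoint's tip in conflicting ways; what forces mutual compatibility is \emph{membership} in the anchor chain, and that requires the anchor to be later and longer, via growth.) Second, the ``stabilised confirmation rule'' is circular: disagreement of the stabilised chains of two processors at the final timeslot is, by construction, the union over all pairs of earlier timeslots of pairwise conflicts --- precisely the quadratic event it was introduced to avoid. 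So the argument as written cannot be completed under the stated hypotheses; the missing idea is the extension of the duration fed to $\mathtt{P}$, combined with anchoring all of $\mathcal{D}_0$ to the single post-$\mathcal{D}_0$ timeslot $t_{\text{last}}$.
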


The conditions on $\ell_{\varepsilon,\mathcal{D}}$ in the statement of Proposition \ref{recal} can reasonably be regarded as weak, because existing protocols which are not already uniformly secure will normally  satisfy the conditions that: $(\dagger_a)$  $\ell_{\varepsilon,\mathcal{D}}$  is independent of $\mathcal{D}$, and;
$(\dagger_b)$ For some  constant  $c$ and any  $\varepsilon \in (0,1)$,  we have $\ell_{\varepsilon} < c \text{ln} \ \frac{1}{\varepsilon}$. 
The example of Bitcoin might be useful for the purposes of illustration here. Bitcoin is secure in the synchronous setting, and the number of blocks required for confirmation is normally considered to be independent of the duration. The number of blocks required for confirmation \emph{does} depend on how sure one needs to be that an adversary cannot double spend in any given time interval, but it's also true that an adversary's chance of double spending in a given time interval decreases exponentially in the number of blocks required for confirmation as well.  So Bitcoin is an example of a protocol satisfying $(\dagger_a)$ and $(\dagger_b)$ above.

\begin{proof}[Proof of Proposition \ref{recal}] 
It is useful to consider a security notion that is intermediate between security and uniform security. 
For the purposes of the following definition, we say that a block is confirmed at timeslot $t$ if there exists at least one processor for whom that is the case.  

\begin{definition}[Timeslot Security] \label{2-secure} A protocol is \textbf{timeslot secure} if the following holds for every choice of security parameter  $\varepsilon>0$, and for all timeslots $t_1,t_2$ in the duration:  With probability $> 1-\varepsilon$, all blocks which are confirmed at  $t_1$ are compatible with all blocks which are confirmed at $t_2$.   
\end{definition}

\noindent So the difference between timeslot security and uniform security is that the latter requires the probability of even a single disagreement to be bounded, while the former only bounds the probability of disagreement for each  pair of timeslots. Similarly, the difference between security and timeslot security is that, for each pair of timeslots, the latter requires the probability of even a single disagreement to be bounded, while the former only bounds the probability of disagreement for each pair of processors at that timeslot pair.

Now suppose $\mathtt{P}$ is live and secure, and that the conditions of Proposition \ref{recal} hold. Then it follows directly from the Union Bound that, if the number of users is bounded,  then some recalibration of $\mathtt{P}$ is live and  timeslot secure and satisfies the conditions of Proposition \ref{recal}. Since a recalibration of a recalibration of $\mathtt{P}$ is a recalibration of $\mathtt{P}$, our main task is therefore to show that, if $\mathtt{P}$ is live and timeslot secure and the conditions of Proposition \ref{recal} hold, then there exists a recalibration  of $\mathtt{P}$ that is uniformly live and uniformly secure.  

So suppose  $(\mathtt{P},\mathtt{C})$ is live and timeslot secure, and that the conditions of Proposition \ref{recal} hold. 
 Suppose we are given $\varepsilon_0$ and $\mathcal{D}_0$ as inputs to our recalibration $(\mathtt{P}',\mathtt{C}')$. We wish to find an appropriate security parameter $\varepsilon_1$ and a duration $\mathcal{D}_1\geq \mathcal{D}_0$ to give as inputs to $\mathtt{P}$ and $\mathtt{C}$, so that uniform security is satisfied with respect to $\varepsilon_0$ and $\mathcal{D}_0$ if we run $\mathtt{P}$ with  inputs $\varepsilon_1$ and $\mathcal{D}_1$ and then terminate after $|\mathcal{D}_0|$ many timeslots. The difficulty is to ensure that $\ell_{\varepsilon_1}$ remains sublinear in $\mathcal{D}_0$. To achieve this, let $n:=|\mathcal{D}_0|$,  set $\varepsilon_1:=\varepsilon_0/2n$ and choose $|\mathcal{D}_1|> n+ \ell_{\varepsilon_1}$, so that $\mathcal{D}_0$ is the first $n$ timeslots in  $\mathcal{D}_1$. 
 This defines the recalibration. It remains to establish uniform liveness and uniform security.

 For uniform liveness we must have that,  for each $\alpha \in (0,1)$,   $\ell_{\varepsilon_1} <\alpha n$ for all sufficiently large values of $n$  -- if this condition holds then it follows from the Union Bound that our recalibration will satisfy uniform liveness (and the required sublinearity in $\mathcal{D}_0$) with respect to $\ell'_{\varepsilon_0,\mathcal{D}_0}:=\ell_{\varepsilon_1}$.   
 The condition holds since we are given that for each $\alpha >0$, $ \ell_{\varepsilon} <\alpha \varepsilon^{-1}$  for all sufficiently small $\varepsilon>0$. Suppose given $\alpha >0$, and put $\alpha':=\alpha\varepsilon_0/2$. Then we have that, for all sufficiently large $n$:

 \[ \ell_{\varepsilon_1}< \alpha' (\varepsilon_0/2n)^{-1}= \alpha n. \] 

Next we must show that the conditions for uniform security are satisfied.  Suppose $\mathtt{P}$ is given inputs $\varepsilon_1$ and $\mathcal{D}_1$ and is actually run for $|\mathcal{D}_1|$ many timeslots.
We aim to show that, with probability $> 1-\varepsilon_0$, there do not exist incompatible blocks $B_1,B_2$, timeslots $t_1,t_2 \in \mathcal{D}_0$ and $p_1,p_2$ such that $B_i$ is confirmed for $p_i$ at $t_i$ for $i\in \{ 1,2 \}$. 
  Let  $t_{\text{last}}$ be the last timeslot of the duration $\mathcal{D}_1$  and define $t^{\ast}:= t_{\text{last}}-\ell_{\varepsilon_1}$. The basic idea is that the two following conditions hold with high probability: (a)  $[t^{\ast}, t_{\text{last}}]$  is a growth interval, and  (b) There does not exist  $t_1\in \mathcal{D}_0$,  processors $p_1,p_2$ and incompatible blocks $B_1,B_2$, such that  $B_1$ is confirmed for $p_1$ at $t_1$ and $B_2$ is confirmed for $p_2$ at $t_{\text{last}}$.  When both these conditions hold, and since $t^{\ast}>n$, this suffices to show that no incompatible and confirmed blocks exist during the duration $\mathcal{D}_0$. Now let us see that in more detail. 
  
  By the choice of $\mathcal{D}_1$, $t^{\ast}>n$. It follows from the definition of liveness that  $(\dagger_1)$ below fails to hold with probability   $\leq \varepsilon_1$: 
 
\begin{enumerate} 
\item[$(\dagger_1)$] $[t^{\ast}, t_{\text{last}}]$ is a growth interval. 
\end{enumerate}
Note that, so long as $(\dagger_1)$ holds, every user has more confirmed blocks at  $t_{\text{last}}$ than any user does at any timeslot in $\mathcal{D}_0$. It also follows from the Union Bound, and the definition of liveness and timeslot security, that  $(\dagger_2)$ below fails to hold with probability $\leq n \varepsilon_1 =\varepsilon_0/2$: 
\begin{enumerate} 
\item[$(\dagger_2)$] There does not exist  $t_1\in \mathcal{D}_0$,  processors $p_1,p_2$ and incompatible blocks $B_1,B_2$, such that  $B_1$ is confirmed for $p_1$ at $t_1$ and $B_2$ is confirmed for $p_2$ at $t_{\text{last}}$. 
\end{enumerate}

\noindent Now note that: 
\begin{enumerate} 
\item[(a)] If $(\dagger_1)$ and $(\dagger_2)$ both hold, then there do not exist incompatible blocks $B_1,B_2$, timeslots $t_1,t_2\in \mathcal{D}_0$ and $p_1,p_2$ such that $B_i$ is confirmed for $p_i$ at $t_i$ for $i\in \{ 1,2 \}$.
\item[(b)] With probability $>1-\varepsilon_1-\varepsilon_0/2 \geq 1-\varepsilon_0$,  $(\dagger_1)$ and $(\dagger_2)$ both hold. 
\end{enumerate} 
So uniform security is satisfied with respect to $\varepsilon_0$ and $\mathcal{D}_0$, as required. 
 \end{proof}

 \begin{definition} \label{sf} 
 We say $\mathtt{P}$ has \textbf{standard functionality} if it is uniformly live and uniformly secure. We say that a recalibration of $\mathtt{P}$  is \textbf{faithful} if it has standard functionality when $\mathtt{P}$ does. 
 \end{definition} 
 
 Proposition \ref{recal} justifies concentrating on protocols which have standard functionality where it is convenient to do so, since protocols which are live and secure will have recalibrations  with standard functionality, so long as the rather weak conditions of  Proposition \ref{recal} are satisfied. Again, when we talk about the security and liveness of a protocol, it is really the extended protocol that we are referring to. 
  
\section{Certificates in the synchronous setting} \label{cs} 

\subsection{The synchronous and unsized setting} 
As outlined in the introduction, part of the aim of this paper is to give a positive answer to Q3, by showing that whether a protocol produces certificates comes down essentially to properties of the processor selection process. In the unsized setting protocols cannot produce certificates. In the sized setting, recalibrated protocols will automatically produce certificates, at least if they are of `standard form'. For the partially synchronous setting, the results of \cite{lewis2021byzantine} and Section \ref{pss} already bear this out: The sized setting is required for security and all secure protocols must produce certificates. 
The following theorem now deals with the unsized and synchronous setting. Recall that, in the unsized setting,  the total resource balance belongs to a determined 
interval $[\alpha_0,\alpha_1]$. We say that the protocol operates `in the presence of a non-trivial adversary' if the setting allows that the adversary may have resource balance at least $\alpha_0$ throughout the duration.

 \begin{theorem} \label{t2}
 Consider the  synchronous and unsized setting. If a protocol is live then, in the presence of a non-trivial adversary,  it does not produce  certificates. 
 \end{theorem}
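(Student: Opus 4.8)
The plan is to refute the certificate property directly: I will exhibit a security parameter $\varepsilon$ and a single protocol instance, consistent with the synchronous unsized setting and a non-trivial adversary, in which with probability at least $\varepsilon$ there exist incompatible confirmed blocks $B_1,B_2$ together with sets $M_1,M_2$ broadcast by the final timeslot and certifying them. The mechanism is a ``private parallel world'': since the adversary is non-trivial it may keep resource balance $\alpha_0$ throughout, and since we are unsized the protocol must already be live when the \emph{entire} resource balance is only $\alpha_0$. The adversary therefore privately runs a second, self-contained honest execution on its own keys; liveness forces both the genuinely broadcast world and the private world to confirm blocks, and broadcasting the private world's messages at the end yields two competing certificates.

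Concretely, fix $\alpha_0>0$, set $\alpha_1=2\alpha_0$, and fix a small $\varepsilon$ (e.g.\ $\varepsilon=\tfrac14$). I would first record two reference instances, each an honest world with a single honest processor of total resource balance $\alpha_0$, run synchronously for a long duration $\mathcal{D}$ with $|\mathcal{D}|>\ell_{\varepsilon,\mathcal{D}}$: an instance $\mathtt{In}_H$ using a processor with keys $\mathcal{U}_{p_H}$, and an instance $\mathtt{In}_A$ using a processor with keys $\mathcal{U}_{p_A}$. By liveness, $[1,t_{\text{last}}]$ is a growth interval with probability $\geq 1-\varepsilon$ in each, so each confirms at least one non-genesis block. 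I then build the combined instance $\mathtt{In}$: honest $p_H$ with balance $\alpha_0$ and adversary $p_A$ with balance $\alpha_0$, so $\mathcal{T}\equiv 2\alpha_0\in[\alpha_0,\alpha_1]$ and the adversary is non-trivial. The adversary follows the honest program $\mathtt{S}$ internally as though it were the sole processor of $\mathtt{In}_A$, issuing genuine permitter requests on its own keys but withholding every resulting message until $t_{\text{last}}$, when it broadcasts the whole private chain.

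The core step is a distributional-equivalence argument by induction on timeslots, exactly of the kind used in the proof of Theorem~\ref{strongsame}. Because in the (unsized) untimed setting the permitter's response to a request $(\mathtt{U},M,A)$ is a function only of the determined variables, the request itself, and $\mathcal{R}(\mathtt{U},t,M)$ -- the balance of the \emph{queried} key alone -- the responses to $p_H$'s queries in $\mathtt{In}$ have the same law as in $\mathtt{In}_H$, and (since $p_H$ never receives the withheld messages) $p_H$'s whole message state evolves identically to $\mathtt{In}_H$; symmetrically, the adversary's private simulation has the same law as the honest processor of $\mathtt{In}_A$. Hence each of the two worlds confirms a non-genesis block with probability $\geq 1-\varepsilon$, so by the Union Bound both do with probability $\geq 1-2\varepsilon$. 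On this event, let $B_2$ be the confirmed leaf of $p_H$'s chain and $B_1$ the confirmed leaf of the private chain; the non-genesis blocks of the two chains are signed by keys in the disjoint sets $\mathcal{U}_{p_H}$ and $\mathcal{U}_{p_A}$ respectively, so the chains share only the genesis block and diverge at distinct children of it, making $B_1,B_2$ incompatible. Taking $M_2$ to be $p_H$'s message state and $M_1$ the private chain (both broadcast by $t_{\text{last}}$, with $B_i\in\mathtt{C}(M_i)$) violates the certificate property with probability $\geq 1-2\varepsilon\geq\varepsilon$.

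I expect the main obstacle to be rigorously justifying the distributional equivalence and the genuine incompatibility. The equivalence hinges precisely on the unsized/untimed permitter property that a key's chance of being granted a broadcast depends only on its own balance and not on the total -- this is what lets a lone adversary with balance $\alpha_0$ mimic an entire honest world, and it is exactly the feature absent in the sized setting (where knowledge of $\mathcal{T}$ lets confirmation be calibrated so that a minority cannot forge a long dense chain). Securing incompatibility requires the two worlds to mine under disjoint key sets so the chains cannot coincide by chance; and a couple of technicalities must be checked -- that the withheld blocks can legally be broadcast by $t_{\text{last}}$ (each block's parent already lies in the adversary's message state, so the broadcast-ordering constraint is met) and that the same $\varepsilon$ can serve simultaneously as the liveness and certificate parameter.
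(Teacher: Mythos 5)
Your proposal is correct and takes essentially the same route as the paper's proof: a non-trivial adversary privately simulates a parallel honest execution, the unsized-setting fact that the permitter's response depends only on the determined variables, the request, and the queried key's own balance (not the undetermined total) yields the inductive distributional equivalence with a genuine honest-only instance, liveness then forces confirmed blocks in both the real and withheld worlds, and broadcasting the withheld messages at the end exhibits incompatible blocks each with a broadcast subjective certificate, contradicting Definition \ref{cert}. The remaining differences are cosmetic -- your concrete choices $\alpha_1 = 2\alpha_0$ and $\varepsilon = \tfrac14$, one processor per world rather than sets $\mathcal{P}_0,\mathcal{P}_1$, transferring the honest side's liveness by equivalence rather than applying liveness directly to the combined instance, and phrasing the permitter property for untimed requests where the paper's induction covers the timed form as well.
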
 
 \begin{proof} 
 The basic idea is that the adversary with resource balance at least $\alpha_0$ can `simulate' their own execution of the protocol, in which only they have non-zero resource balance, while the non-faulty processors carry out an execution in which the adversary does not participate. Simulating their own execution means that the adversary carries out the protocol as usual, while ignoring messages broadcast by the non-faulty processors,  but does not initially broadcast messages when given permission to do so. Liveness (together with the fact that the resource pool is undetermined) guarantees that, with high probability, both the actual and simulated executions produce blocks which look confirmed from their own perspective. These blocks will be incompatible with each other and, once the adversary finally broadcasts the messages that they have been given permission for,  these blocks will all have subjective certificates which are subsets of the set of broadcast messages. This suffices to show that the protocol does not produce certificates.  
 
 More precisely, we consider two instances of the protocol $\mathtt{In}_0$ and $\mathtt{In}_1$ in the synchronous and unsized setting, which have the same values for all determined variables -- including the same sufficiently small security parameter $\varepsilon$ and  the same sufficiently long duration $\mathcal{D}$ -- and also have the same set of processors and the same message delivery rule, but which differ as follows: 
 \begin{itemize} 
 \item In $\mathtt{In}_0$, a set of processors $\mathcal{P}_0$ control public keys in a set $\mathcal{U}_0$, which are the only public keys that do not have zero resource balance throughout the duration. The total resource balance $\mathcal{T}$ has a fixed value, $\alpha $ say. 
 \item In $\mathtt{In}_1$, it is the adversary who controls the public keys in $\mathcal{U}_0$, and those keys have the same resource balance throughout the duration as they do in $\mathtt{In}_0$. Now, however,  another set of processors $\mathcal{P}_1$ control public keys in a set $\mathcal{U}_1$ (disjoint from $\mathcal{U}_0$), and the public keys in $\mathcal{U}_1$ also have total resource balance $\alpha $ throughout the duration, i.e.\ the resource balances of these keys always add to $\alpha$. 
 \end{itemize} 
 In $\mathtt{In}_1$, we suppose that the adversary simulates the processors in $\mathcal{P}_0$ for $\mathtt{In}_0$ (which can be done with the single processor $p_A$), which means that the adversary carries out the instructions for those processors, with the two following exceptions. Until a certain timeslot $t^{\ast}$, to be detailed subsequently, they: 
 \begin{enumerate} 
 \item[(a)]  Ignore all messages broadcast by non-faulty processors, and; 
 \item[(b)] Do not actually broadcast messages when permitted, but consider them received by simulated processors in $\mathcal{P}_0$ as per the message delivery rule. 
 \end{enumerate} 
 
 For $\mathtt{In}_0$ (so long as the duration is sufficiently long), liveness guarantees the existence of a timeslot $t_0$ for which the following holds with probability $>1-\varepsilon$: 
 \begin{enumerate} 
 \item[$(\diamond_0)$] At $t_0$ there exists a set of broadcast messages $M_0$ and a block $B_0$ such that $B_0\in \mathtt{C}(M_0)$. 
 \end{enumerate} 
 
  For $\mathtt{In}_1$, liveness guarantees the existence of a timeslot $t_1$ for which the following holds with probability $>1-\varepsilon$: 
 \begin{enumerate} 
 \item[$(\diamond_1)$] At $t_1$ there exists a set of broadcast messages $M_1$ and a block $B_1$ such that $B_1\in \mathtt{C}(M_1)$. 
 \end{enumerate} 
 
 Choose $t^{\ast}>t_0,t_1$. Our framework stipulates that the instructions of the protocol
for a given user at a given timeslot are a deterministic function of their present state and the message set and permission set received at that timeslot. 
It also stipulates that the response of the permitter to a request
$(t',\mathtt{U},M,A)$ is a probabilistic function of the determined variables,  $(t',\mathtt{U},M,A)$, and of
$\mathcal{R}(\mathtt{U},t',M)$.
Since we are working in the unsized setting, $\mathtt{In}_1$ and $\mathtt{In}_0$ have the same determined variables. It therefore follows
by induction on timeslots $t\leq t^{\ast}$, that the following is true at all points until the end of timeslot $t$:
\begin{enumerate} 
\item[$(\diamond_2)$] The probability distribution for $\mathtt{In}_0$ on the set of permission sets given by the permitter   is identical to the probability distribution for $\mathtt{In}_1$ on the set of permission sets given by the permitter to the adversary. 
\end{enumerate} 

Now suppose that at timeslot $t^{\ast}$ the adversary broadcasts all messages for which they have been given permission by the permitter. Note that, according to the assumptions of Section
\ref{blockstructure}, any block $B_0$ broadcast by the adversary at $t^{\ast}$ will be incompatible with any block $B_1$ that has been broadcast by any honest user up to that point. Combining $(\diamond_0)$, $(\diamond_1)$ and $(\diamond_2)$, we see that (so long as $\varepsilon$ is sufficiently small that $\varepsilon<1-2\varepsilon$) the following holds with probability 
 $>\varepsilon$ for $t^{\ast}$ and $\mathtt{In}_1$: There exist incompatible blocks $B_0,B_1$,  and $M_0,M_1$ which are broadcast by the end of $t^{\ast}$, such that $B_i\in \mathtt{C}(M_i)$ for $i\in \{ 0,1 \}$.
 This suffices to show that the protocol does not produce certificates. 
 \end{proof} 
 
 \subsection{The synchronous and sized setting}\label{ss:recalibration}
 
\noindent \textbf{The example of sized Bitcoin.}  Our aim in this subsection is to show that, if we work in the synchronous and sized setting, and if a protocol is of `standard form', then a recalibration will produce certificates. To make this precise, however, it will be necessary to recognise the potentially \emph{time dependent} nature of proofs of confirmation. To explain this idea,  it is instructive to consider the example of Bitcoin in the sized setting: The protocol is Bitcoin, but now we are told in advance precisely how the hash rate capability of the network varies over time, as well as bounds on the hash rate of the adversary.\footnote{Normally we think of PoW protocols as operating in the unsized setting, precisely because such guarantees on the hash rate are not realistic.} To make things concrete, let us suppose that the total hash rate is fixed over time, and that the adversary has 10\% of the hash rate at all times. Suppose that, during the first couple of hours of running the protocol, the difficulty setting is such that the network as a whole (with the adversary acting honestly) will produce an expected one block every 10 minutes. Suppose further that, after a couple of hours, we  see a block $B$ which belongs to a chain $C$, in which it is followed by 10 blocks. In this case, the constraints we have been given mean that it is very unlikely that $B$ does not belong to the longest chain. So, \emph{at that timeslot}, $C$ might be considered a proof of confirmation for $B$, i.e.\ the existence of the chain $C$ can be taken as proof that $B$ is confirmed. The nature of this proof is time dependent, however. The same set of blocks (i.e. $C$)  a large number of timeslots later would not constitute proof of confirmation. 
 
 If we now consider a PoS version of the example above, modified to work for Snow White rather than Bitcoin, then the proof produced will \emph{not} be time dependent. This is because PoS protocols function in the timed setting, i.e.\ when permission is given to broadcast $m$ in response to a request  $(t,\mathtt{U},M,A)$, other users are able to determine $t$ from $m$.  In order to prove that (recalibrated) protocols in the sized setting produce certificates, we will have to make the assumption that we are also working in the timed setting. \\
  
  \noindent \textbf{Protocols in standard form.} The basic intuition behind the production of certificates in the sized setting can be seen from the example of ``Sized Bitcoin'' above. Once a block is confirmed, non-faulty processors will work `above' this block. So long as those processors possess a majority of the total resource balance, and so long as the permitter reflects this fact in the permissions it gives, then those  non-faulty processors will broadcast a set of messages which suffices (by its existence rather than the fact that it is the full message state of any user) to give proof of confirmation. 
 This proof of confirmation might be temporary, but it will not be temporary in the timed setting.  
 
 This intuitive argument, however, assumes that the protocol satisfies certain standard properties. As alluded to above, there is an assumption that the set of messages broadcast by a group of processors will reflect their resource balances and that the adversary will have a minority resource balance. There is also an assumption that broadcast messages  will (in some sense) point to a particular position on the blockchain. So we will have to formalise these ideas, and the results we prove will only hold modulo the assumption that these standard properties are satisfied.

 First, let us formalise the idea that messages always point to a position on the blockchain. 
 
 \begin{definition} 
 We say that a protocol is in \textbf{standard form} if it satisfies all of the following: 
 \begin{itemize} 
 \item The protocol has standard functionality (see Definition \ref{sf}). 
 \item Every broadcast message is `attached' to a specific block (blocks being attached to themselves). 
 \item While $B$ is confirmed for $p$,  the state transition diagram $\mathtt{S}$ will only instruct $p$ to broadcast messages which are attached to $B$ or descendants of $B$.  
 
 \end{itemize}
 \end{definition}
 
\noindent \textbf{Reflecting the resource pool.}  Now let us try to describe how the permitter might  reflect the resource pool. We will need a simple way to say that one set of processors consistently has a higher resource balance than another. 
 
 \begin{definition} 
 For $\Theta>1$, we say a set of public keys $\mathcal{U}_1$ \textbf{dominates} another set $\mathcal{U}_2$, denoted $\mathcal{U}_1 >_{\Theta} \mathcal{U}_2$, if the following holds for all sets of broadcast messages $M$ and all timeslots $t$: 
 
 \[ \sum_{\mathtt{U}\in \mathcal{U}_1} \mathcal{R}(\mathtt{U},t,M) >\Theta \cdot \sum_{\mathtt{U}\in \mathcal{U}_2} \mathcal{R}(\mathtt{U},t,M). \]
 \end{definition}
 
 Next, we will need to formalise the idea that, if one set of keys dominates another, then they will be able to broadcast discernibly different sets of messages. Recall that, in the timed setting, each message $m$ corresponds to a timeslot $t_m$, which can be determined from $m$. We write $\mathcal{M}[t_1,t_2]$ to denote the set  $\{ M| \ \forall m\in M, \  t_m\in [t_1,t_2] \}$.  We will say that the set of keys $\mathcal{U}_0$ is \emph{directed to broadcast $M$} if, for every $m\in M$,  there is some member of $\mathcal{U}_0$ that is given permission to broadcast $m$  and is directed to broadcast $m$ by the protocol. We will say that $\mathcal{U}_0$ is \emph{able to broadcast $M$} if, for every $m\in M$,  there is some member of $\mathcal{U}_0$ that is given permission to broadcast $m$. We define $\mathcal{M}^{\ast}:= \{ M|\ M \text{ is finite} \}$. We let $\mathbb{T}$ be the set of functions $T: \mathcal{D} \times \mathcal{M} \rightarrow \mathbb{R}_{\geq 0}$ (so that the total resource balance $\mathcal{T}\in \mathbb{T}$).  We say that a set of keys $\mathcal{U}_0$ has total resource balance $T:  \mathcal{D} \times \mathcal{M} \rightarrow \mathbb{R}_{\ge 0}$ if  $T(t,M) = \sum_{\mathtt{U}\in \mathcal{U}_0} \mathcal{R}(\mathtt{U},t,M)$.  In the definition below, we say $r$ is sublinear in $|\mathcal{D}|$ if, \ for each $\Theta, \varepsilon,T$,  and for every $\alpha\in (0,1)$,  it holds that $r(\Theta,\varepsilon,T,|\mathcal{D}|)<\alpha |\mathcal{D}|$ for all sufficiently large $|\mathcal{D}|$.

  \begin{definition} \label{reflect}
 We  say that  $(\mathtt{S},\mathtt{O}, \mathtt{C})$ \textbf{reflects the resource pool} if there exist computable finite valued functions $r: \mathbb{R}_{>1}\times \mathbb{R}_{>0} \times \mathbb{T} \times \mathbb{N} \rightarrow \mathbb{N}$ and $\mathtt{X}: \mathcal{M}^{\ast} \times \mathbb{R}_{>1}\times \mathbb{R}_{>0} \times \mathbb{T} \times \mathbb{N} \rightarrow 2^{\mathcal{M}^{\ast}}$, such that: 
 \begin{enumerate}  
\item $r$ is sublinear in $|\mathcal{D}|$. 
 
 \item If $\mathcal{U}_1 \cup \mathcal{U}_2$ has total resource balance $T$, and if $\mathcal{U}_1>_{\Theta} \mathcal{U}_2$, then, when the protocol is run with security parameter $\varepsilon$ and for $|\mathcal{D}|$ many timeslots, the following holds with probability $>1-\varepsilon$: For all intervals of timeslots $[t_1,t_2]$ with $t_2-t_1\geq r(\Theta,\varepsilon,T,|\mathcal{D}|)$, there exists some $M\in \mathcal{M}[0,t_1)$ and an element of   $\mathtt{X}(M,\Theta,\varepsilon,T, |\mathcal{D}|) \cap \mathcal{M}[t_1,t_2]$ which $\mathcal{U}_1$ is directed to broadcast, while there does not exist any $M'\in \mathcal{M}[0,t_1)$ which is broadcast and such that $\mathcal{U}_2$ is able to broadcast some element of $\mathtt{X}(M',\Theta,\varepsilon,T,|\mathcal{D}|) \cap \mathcal{M}[t_1,t_2]$. 
 \end{enumerate} 
 \end{definition}
 
 So in Definition \ref{reflect}, $r$ specifies a number of timeslots. Then $\mathtt{X}$ specifies certain sets of messages $M$ such that, if $\mathcal{U}_1>_{\Theta} \mathcal{U}_2$ and $\mathcal{U}_1 \cup \mathcal{U}_2$ has total resource balance $T$, then $\mathcal{U}_1$ can be expected to broadcast one of these sets $M$ in any interval of sufficient length (i.e.\ the length specified by $r$). To make this interesting, we also have that $\mathcal{U}_2$ can be expected \emph{not} to make such broadcasts. 
 To see why this is a natural and reasonable condition to assume, it is instructive to consider the example of Sized Bitcoin. Suppose that in some execution the honest users always have at least 60\% of the mining power. Then, over any long period of time $r$, we can be fairly sure that honest users will get to make at least 50\% of the expected number of block broadcasts, while the adversary is unlikely to be able to make such broadcasts if $r$ is large enough. In fact, the exponentially fast convergence for the law of large numbers guaranteed by bounds like Hoeffding's inequality, means $r$ only needs to grow with $\text{ln} \ 1/p$, where $p$ is the probability of error (i.e.\ the probability these conditions on the block broadcasts don't hold in a given interval).  It is therefore not difficult to see that Sized Bitcoin would reflect the resource pool if it could be implemented in a timed setting. Similar arguments can be made for all well known PoS protocols,\footnote{The example of Snow White was discussed previously. As suggested in Section \ref{intro}, one way to define $\mathtt{X}$ in the context of Snow White is to consider 
long chains of sufficient \emph{density}, meaning that they have members
corresponding to most possible timeslots, that they cannot likely be
produced by a (sufficiently bounded) adversary.} and these \emph{are} implemented in the timed setting.

 \begin{definition} 
 In the \textbf{bounded adversary} setting it is assumed that: 
 \begin{enumerate} 
 \item[(i)]   $\mathcal{U}_1 >_{\Theta} \mathcal{U}_2$ for some determined input parameter $\Theta>1$, where  $\mathcal{U}_1$ is the set of keys controlled by non-faulty processors, and $\mathcal{U}_2$ is the the set of keys controlled by the adversary. 
 \item[(ii)]  $(\mathtt{S},\mathtt{O},\mathtt{C})$  reflects the resource pool. 
 \end{enumerate} 
 \end{definition} 

 Finally, we can now formalise the idea that under standard conditions,  standard protocols in the sized setting produce certificates. 

\begin{theorem} \label{t3}
Consider  the timed,  bounded adversary and  sized setting.  If $\mathtt{P}$  is in standard form, then there exists a faithful recalibration that produces certificates. 
\end{theorem}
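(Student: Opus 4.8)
The plan is to leave the state transition diagram $\mathtt{S}$ and permitter $\mathtt{O}$ untouched, to run $\mathtt{P}$ with the security parameter shrunk by a fixed constant, and to replace only the notion of confirmation, building the new one directly out of the witnesses supplied by Definition \ref{reflect}. Fix the functions $r,\mathtt{X}$ witnessing that $(\mathtt{S},\mathtt{O},\mathtt{C})$ reflects the resource pool and the domination parameter $\Theta$ of the bounded adversary setting, and write $\mathcal{T}$ for the (determined) total resource balance. I would define a new confirmation $\mathtt{C}'$ by declaring $B$ confirmed in a message state $M$ when $M$ contains a \emph{dense witness above $B$}: an interval $[t_1,t_2]$ of length at least a suitable multiple of $r(\Theta,\varepsilon,\mathcal{T},|\mathcal{D}|)$ such that, for every sub-interval $[s_1,s_2]\subseteq[t_1,t_2]$ of length $r$, there is a broadcast base $M_0\subseteq M$ in $\mathcal{M}[0,s_1)$ and a set $W\subseteq M$ lying in $\mathtt{X}(M_0,\Theta,\varepsilon,\mathcal{T},|\mathcal{D}|)\cap\mathcal{M}[s_1,s_2]$, every member of $W$ being attached to $B$ or a descendant of $B$. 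I set $\mathtt{C}'(M)$ to be the deepest such $B$ together with its ancestors (with a fixed tie-break), so that $\mathtt{C}'(M)$ is always a chain contained in $M$ and is computable from the inputs. The recalibration $(\mathtt{P}',\mathtt{C}')$ then runs $\mathtt{P}$ with security parameter $\varepsilon/3$ and confirmation $\mathtt{C}'$.

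Because $\mathtt{P}$ is in standard form it already has standard functionality, hence is uniformly live and uniformly secure (Definition \ref{sf}); this spares me from invoking Proposition \ref{recal}. Working in the synchronous setting I would fix the single event $E$, of probability $>1-\varepsilon$, on which (i) uniform liveness and (ii) uniform security of $\mathtt{P}$ hold and (iii) condition~2 of Definition \ref{reflect} holds for the honest/adversary partition $\mathcal{U}_1>_{\Theta}\mathcal{U}_2$; a union bound over these three events is the only place the constant factor $\varepsilon/3$ is spent, and since condition~2 of Definition \ref{reflect} already quantifies over all intervals \emph{inside} the probability, no $\varepsilon$-dependent blow-up of $r$ is incurred. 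On $E$, uniform liveness confirms ever deeper blocks under $\mathtt{C}$; standard form confines honest broadcasts to above the (single, by uniform security) confirmed chain; and the directed-broadcast half of condition~2 forces the dominant honest set to be directed to broadcast an $\mathtt{X}$-witness above that chain in every window of length $r$. These assemble into a dense witness, so the number of $\mathtt{C}'$-confirmed blocks grows over every sufficiently long interval, with liveness bound the sum of the $\mathtt{C}$-bound at $\varepsilon/3$ and a constant multiple of $r(\Theta,\varepsilon/3,\mathcal{T},|\mathcal{D}|)$; both summands are sublinear in $|\mathcal{D}|$, so the recalibration is uniformly live.

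The crux is that the recalibration produces certificates; uniform security of the recalibration then comes for free, since (as observed after Corollary \ref{collapse}) producing certificates implies uniform security, whence faithfulness. Arguing on $E$, suppose there were broadcast sets $M_1,M_2$ certifying incompatible blocks $B_1,B_2$ under $\mathtt{C}'$, i.e.\ dense witnesses above $B_1$ and above $B_2$. The unable-to-broadcast half of condition~2 of Definition \ref{reflect} says the minority set $\mathcal{U}_2$ cannot broadcast any $\mathtt{X}$-witness on any broadcast base; applied window by window, this forces the honest set $\mathcal{U}_1$ to contribute a message to the witness in every length-$r$ window above each of $B_1$ and $B_2$, and by standard form those honest messages are broadcast only while the broadcasting processor's confirmed chain lies at or below the block to which the message is attached. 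The plan is then to upgrade this continual honest activity above $B_i$ into genuine $\mathtt{C}$-confirmation of $B_i$: a dominating honest set building above $B_i$ throughout an interval of length $\gg r$ drives the $\mathtt{C}$-confirmation count upward on the $B_i$-branch by uniform liveness, so that $B_i$ becomes $\mathtt{C}$-confirmed for some honest processor. But then $B_1$ and $B_2$ are incompatible blocks each $\mathtt{C}$-confirmed for an honest processor, contradicting uniform security on $E$. Hence no such $M_1,M_2$ exist, and the recalibration produces certificates.

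The main obstacle is exactly this upgrade: turning ``the honest majority is forced to contribute above $B_i$ in every length-$r$ window'' into ``$B_i$ is $\mathtt{C}$-confirmed''. This is where all the hypotheses must mesh and where the \emph{timed} setting is indispensable — because the timestamps $t_m$ are non-manipulable, the messages of a witness genuinely populate their declared interval and cannot be back-dated by the adversary to manufacture spurious density, precisely the move that would defeat the argument (and that makes the analogue fail in the untimed, unsized setting, cf.\ Theorem \ref{t2}). Concretely one must calibrate the length threshold in the definition of $\mathtt{C}'$ (the multiple of $r$) against the $\mathtt{C}$-liveness bound so that continual honest presence over the witness interval provably yields a $\mathtt{C}$-confirmation strictly above the divergence point of $B_1$ and $B_2$; this is the step that rules out the ``tentative'' scenario in which honest processors densely extend a branch they later abandon. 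The remaining checks are routine: that $\mathtt{C}'(M)$ is a chain-valued, input-computable confirmation function; that shrinking $\varepsilon$ only by the constant factor $1/3$ preserves sublinearity of every bound; and hence that the recalibration is faithful, being uniformly live by the liveness step and uniformly secure because it produces certificates.
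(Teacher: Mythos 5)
Your overall architecture matches the paper's: keep $(\mathtt{S},\mathtt{O})$, shrink the security parameter by a constant factor, build the new confirmation rule $\mathtt{C}'$ out of the $\mathtt{X}$-witnesses of Definition \ref{reflect}, and control everything by a union bound over uniform liveness, uniform security, and condition 2 of Definition \ref{reflect}. But the step you yourself flag as ``the main obstacle'' -- upgrading ``honest processors contribute witness messages above $B_i$ in every length-$r$ window'' to ``$B_i$ is $\mathtt{C}$-confirmed for some honest processor'' -- is a genuine gap, and the tools you allow yourself cannot close it. Standard form only gives you that an honestly broadcast witness message attached to a descendant of $B_i$ is also attached to a descendant of that processor's confirmed leaf; this makes $B_i$ \emph{compatible} with the confirmed leaf (both are ancestors of the attached block), but it does not make $B_i$ confirmed: $B_i$ may lie strictly \emph{above} the confirmed leaf. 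In particular, nothing you assume prevents honest broadcasts from being split across two incompatible forks that both sit strictly above the common confirmed chain (Definition \ref{reflect} only directs $\mathcal{U}_1$ to broadcast \emph{some} witness per window; it does not say all honest $\mathtt{X}$-activity is attached to a single fork), so dense witnesses can a priori accumulate above two incompatible blocks $B_1,B_2$ with neither ever confirmed -- and then uniform security of $\mathtt{P}$ yields no contradiction. Uniform liveness cannot rescue this, because it only says the confirmation \emph{count} grows; it says nothing about which branch the newly confirmed blocks lie on, so ``continual honest presence above $B_i$'' does not drive confirmation up the $B_i$-branch. Your proposed fix (``calibrate the length threshold against the $\mathtt{C}$-liveness bound'') is exactly what is missing, and it is never carried out.

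The paper closes this hole with a definitional device your $\mathtt{C}'$ lacks: it couples the \emph{length} of the certified chain to a pre-scheduled window. Setting $t_i := i\cdot(\ell_{\varepsilon',\mathcal{D}} + r(\Theta,\varepsilon',\mathcal{T},|\mathcal{D}|))$, a certificate for a chain of length $i$ must consist of a witness in $\mathtt{X}(\cdot)\cap\mathcal{M}[t_i, t_i + r(\Theta,\varepsilon',\mathcal{T},|\mathcal{D}|)]$ all of whose messages are attached to the leaf of that length-$i$ chain or to its descendants. Any broadcast witness must contain at least one honestly broadcast message (the ``unable to broadcast'' clause of Definition \ref{reflect} means every such witness contains a message for which the adversary has no permission), and by uniform liveness that honest broadcaster already has at least $i$ confirmed blocks by $t_i$; since its confirmed leaf has depth at least $i$ and is compatible with the certified leaf of depth exactly $i$, the certified chain is \emph{contained in} that processor's confirmed chain -- confirmation, not mere compatibility. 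Compatibility of any two certificates then follows in one line from uniform security of $\mathtt{P}$, with no ``upgrade'' needed. Until you either import this length-to-schedule coupling or supply a substitute mechanism forcing certified blocks into honest confirmed chains, your construction does not prove the theorem.
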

\begin{proof} 
To define our recalibration $(\mathtt{P}',\mathtt{C}')$, suppose we are given values for $\varepsilon,\mathcal{T},\Theta$ and $\mathcal{D}$. We need to specify a value $\varepsilon'$ to give as input to $\mathtt{P}$ (we will leave other values unchanged), and we must also define $\mathtt{C}'$. Then we need to show that the new extended protocol  is uniformly live and produces certificates. 

We define $\varepsilon':=\varepsilon/4$. Towards defining $\mathtt{C}'$, suppose that $\mathtt{P}$ satisfies uniform liveness with respect to $\ell_{\varepsilon',\mathcal{D}}$. We divide the duration into intervals of length  $>\ell_{\varepsilon',\mathcal{D}}$, by defining $t_i:= i \cdot (\ell_{ \varepsilon',\mathcal{D}}+ r(\Theta,\varepsilon',\mathcal{T},|\mathcal{D}|))$.
 From the definition of uniform liveness we have the following. 
\begin{enumerate} 
\item[$(\$_1)$] With probability $>1-\varepsilon/4$ it holds that,  for all $i$  with $t_i\leq |\mathcal{D}|$, all users have at least $i$ many confirmed blocks  by the end of  timeslot $t_i$. 
\end{enumerate} 

Now suppose  $(\mathtt{P}, \mathtt{C})$ satisfies Definition \ref{reflect} with respect to $r$ and $\mathtt{X}$. For each $i>0$, define $t_i^{\ast}:= t_i + r(\Theta,\varepsilon',\mathcal{T},|\mathcal{D}|)$. Let $I_i$ be the interval $[t_i,t_i^{\ast}]$, and write $\mathcal{M}[I_i]$ to denote $\mathcal{M}[t_i,t_i^{\ast}]$. Let $\mathcal{U}_1$ be the set of keys controlled by non-faulty processors, and let $\mathcal{U}_2$ be the the set of keys controlled by the adversary.  According to Definition \ref{reflect}, we can then conclude that:

\begin{enumerate} 
\item[$(\$_2)$] It holds with probability $>1-\varepsilon/4$ that, whenever $I_i$ is contained in the duration, there exists some $M\in \mathcal{M}[0,t_i)$ which is broadcast  and an element of   $\mathtt{X}(M,\Theta,\varepsilon',\mathcal{T},|\mathcal{D}|) \cap \mathcal{M}[I_i]$ which $\mathcal{U}_1$ is directed to broadcast, while there does not exist any $M'\in \mathcal{M}[0,t_i)$ which is broadcast and such that $\mathcal{U}_2$ is able to broadcast some element of $\mathtt{X}(M',\Theta,\varepsilon',\mathcal{T},|\mathcal{D}|) \cap \mathcal{M}[I_i]$.
\end{enumerate} 

Since $\mathtt{P}$ is uniformly secure, we also know that: 
\begin{enumerate} 
\item[$(\$_3)$] With probability $> 1-\varepsilon/4$, there do not exist incompatible blocks $B_1,B_2$, timeslots $t_1,t_2$ and $U_1,U_2$ such that $B_i$ is confirmed for $U_i$ at $t_i$ for $i\in \{ 1,2 \}$.
\end{enumerate} 

So now define $\mathtt{X}^{\ast} (\Theta,\varepsilon',\mathcal{T},|\mathcal{D}|)$ to be all those $M=M'\cup M''$ such that $M'\in   \mathtt{X}(M'',\Theta,\varepsilon',\mathcal{T},|\mathcal{D}|)$, and for which there exists $i$ such that all of the following hold: (i)  $I_i \subseteq \mathcal{D}$; (ii) $M'\in \mathcal{M}[I_i]$, $M''\in \mathcal{M}[0,t_i)$ and; (iii) For some chain $C$ of length $i$ with leaf $B$, all messages in $M'$ are attached to $B$ or its descendants. 

Now if  $M\in \mathtt{X}^{\ast} (\Theta,\varepsilon',\mathcal{T},|\mathcal{D}|)$, then let $M',M''$ be such that  $M'\in   \mathtt{X}(M'',\Theta,\varepsilon',\mathcal{T},|\mathcal{D}|)$ and (i)--(iii) above are satisfied, let $i_{M'}$ be the (unique) $i$ such that (i)--(iii) hold w.r.t.\ $M'$, let $C$ be as specified in (iii) for $i_{M'}$, and define $\mathtt{C}^{\ast}(M):= C$. We also define $\mathtt{C}^{\ast}(\emptyset)=\emptyset$. This function $\mathtt{C}^{\ast}$ is almost the notion of confirmation that we want for our recalibration, but the problem is that it is only defined for very specific values of  $M$.  We will use $\mathtt{C}^{\ast}$ to help us define $\mathtt{C}'$ that is defined for all possible $M$. 
Combining $(\$_1)$,  $(\$_2)$ and $(\$_3)$, and the definition of $\mathtt{X}^{\ast}$, it follows that with probability $>1-\varepsilon$ both of the following hold:
\begin{enumerate} 
\item If  $M,M'\in \mathtt{X}^{\ast} (\Theta,\varepsilon',\mathcal{T},|\mathcal{D}|)$ are both broadcast, then all blocks in $\mathtt{C}^{\ast}(M)$ are compatible with all those in $\mathtt{C}^{\ast}(M')$. 
\item For every $i>0$ with $I_i \subseteq \mathcal{D}$,  there exists  $M\in \mathtt{X}^{\ast} (\Theta,\varepsilon',\mathcal{T},|\mathcal{D}|)$ which is broadcast and such that, for some $M',M''$: (i)  $M=M'\cup M''$; (ii) $M'\in   \mathtt{X}(M'',\Theta,\varepsilon',\mathcal{T},|\mathcal{D}|)$; (iii) $M'\in \mathcal{M}[I_i]$, $M''\in \mathcal{M}[0,t_i)$, and; (iv)  For some chain $C$ of length $i$ with leaf $B$, all messages in $M'$ are attached to $B$ or its descendants. 
\end{enumerate} 

In order to define $\mathtt{C}'$ for our recalibration, we can then proceed as follows. Given arbitrary $M$, choose $M'\subseteq M$ such that  $M' \in \mathtt{X}^{\ast} (\Theta,\varepsilon',\mathcal{T},|\mathcal{D}|)$ and $\mathtt{C}^{\ast}(M')$ is of maximal length, or if there exists no $M'$ satisfying these conditions then define $M':=\emptyset$. We define $\mathtt{C}'(M):=\mathtt{C}^{\ast}(M')$. It follows from (1) and (2) above that $(\mathtt{P}',\mathtt{C}')$ produces certificates and satisfies uniform liveness with respect to $\ell'_{\varepsilon,\mathcal{D}}:= \ell_{ \varepsilon',\mathcal{D}}+ 2r(\Theta,\varepsilon',\mathcal{T},|\mathcal{D}|)$.
\end{proof}

\section{Appendix -- Table 1.} 

\begin{table}[H] 
\begin{tabular}{l|l}
term & meaning \\
\hline \hline 
$B $    &                         a block \\
$\mathtt{C}$    &               a notion of confirmation \\ 
$\mathcal{D}$ &          the duration \\ 
$\Delta $     &                    bound on message delay during synchronous \\ 
      & intervals \\ 
 $\varepsilon$ &            the security parameter \\
 $\mathtt{In} $    &             a protocol instance  \\ 
 $m$   &              a message \\
 $M $      &                        a set of messages \\
 $\mathcal{M}$   &         the set of all possible sets of messages \\ 
 $\mathtt{O}$      &               a permitter oracle  \\ 
$p$ & a processor \\
$P$    &                          a permission set \\ 
$\mathtt{P}$     &               a permissionless protocol  \\ 
$R$    &                          a request set  \\ 
$\mathcal{R}$   &          the resource pool \\ 
$\mathtt{S}$     &              a state transition diagram  \\ 
$\sigma$    &     a message \\ 
$t$           &                 a timeslot \\
$(t,\mathtt{U},M,A)$      &  a request in the timed setting  \\ 
$\mathtt{T} $      &               a timing rule \\ 
$\mathtt{U}$ &  a public key \\ 
$(\mathtt{U},M,A)$     &  a request in the untimed setting  \\ 
$\mathcal{U}$ &          the set of all public keys \\ 
$\mathcal{U}_p$ &      the set public keys for $p$ \\

\end{tabular}
\caption{Some commonly used variables and terms.} 
\label{terms} 
\end{table} 

\bibliographystyle{ACM-Reference-Format}

\end{document}